\newtheorem{theorem}{Theorem}
\def\BibTeX{{\rm B\kern-.05em{\sc i\kern-.025em b}\kern-.08em
    T\kern-.1667em\lower.7ex\hbox{E}\kern-.125emX}}
\begin{document}

\title{Privacy-preserving Decentralized Deep Learning with
Multiparty Homomorphic Encryption}

\author{Guowen~Xu, Guanlin~Li, Shangwei~Guo, Tianwei~Zhang, Hongwei~Li,~\IEEEmembership{Senior Member,~IEEE}

\IEEEcompsocitemizethanks { \IEEEcompsocthanksitem Guowen~Xu, Guanlin~Li, and Tianwei~Zhang are with the School of Computer Science and Engineering, Nanyang Technological University. (e-mail: guowen.xu@ntu.edu.sg; guanlin001@e.ntu.edu.sg; tianwei.zhang@ntu.edu.sg)
\IEEEcompsocthanksitem Shangwei~Guo is  with the  College of Computer Science, Chongqing University, Chongqing 400044, China.(e-mail: swguo@cqu.edu.cn)
\IEEEcompsocthanksitem  Hongwei~Li is with the School of Computer Science and Engineering,  University of
Electronic Science and Technology of China, Chengdu 611731, China. (e-mail: hongweili@uestc.edu.cn) }}

 \IEEEcompsoctitleabstractindextext{
\begin{abstract}
\renewcommand{\raggedright}{\leftskip=0pt \rightskip=0pt plus 0cm}
 \raggedright
Decentralized deep learning plays a key role in collaborative model training due to its attractive properties, including tolerating high network latency and less prone to single-point failures. Unfortunately, such a training mode is more vulnerable to data privacy leaks compared to other distributed training frameworks. Existing efforts exclusively use differential privacy as the cornerstone to alleviate the data privacy threat. However,  it is still not clear whether differential privacy can provide a satisfactory utility-privacy trade-off for model training, due to its inherent contradictions. To address this problem, we propose \textbf{D$^{2}$-MHE}, the \textit{first} secure and efficient  decentralized training framework with lossless precision. Inspired by the latest developments in the homomorphic encryption technology, we design a multiparty version of Brakerski-Fan-Vercauteren (BFV), one of the most advanced cryptosystems, and  use it to implement private gradient updates of users' local models. \textbf{D$^{2}$-MHE} can reduce the communication complexity of general Secure Multiparty Computation (MPC) tasks from quadratic to linear in the number of users, making it very suitable and scalable for large-scale decentralized learning systems.
   Moreover, \textbf{D$^{2}$-MHE} provides  strict semantic security protection even if the  majority of users are dishonest with collusion. We conduct extensive experiments on  MNIST and CIFAR-10 datasets to demonstrate the superiority of \textbf{D$^{2}$-MHE}  in terms of model accuracy, computation and communication cost compared with existing schemes.

\end{abstract}
\begin{IEEEkeywords}
 Privacy Protection, Decentralized Deep Learning, Homomorphic Encryption.
\end{IEEEkeywords}}
\maketitle

\IEEEdisplaynotcompsoctitleabstractindextext

\IEEEpeerreviewmaketitle

\section{Introduction}
As a promising technology, deep learning has been widely used in various scenarios ranging from autonomous driving, video surveillance  to face recognition. To achieve satisfactory performance for complex artificial intelligence tasks, modern deep learning models need to be trained from excessive computing resources and data samples. Hence, a conventional approach is centralized training (Figure ~\ref{Fig:Different types of training frameworks}(a)): each user is required to upload their training samples to a third party (e.g., cloud server), which has enough computing resources to produce the final model. However, this fashion raises widespread privacy concerns about the training data \cite{sav2020poseidon}. Intuitively, an untrusted third party has financial incentive to abuse the sensitive data collected from different users, such as malicious dissemination, packaging and selling them to the black market.

\begin{figure*}
  \centering
  \subfigure[]{\includegraphics[width=0.3\textwidth]{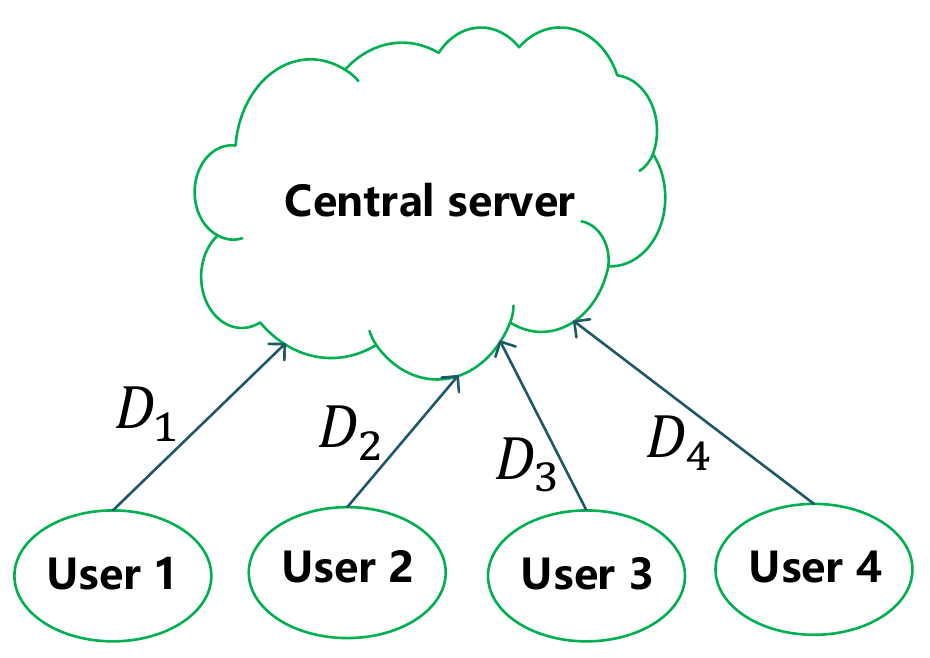}}
  \subfigure[]{\includegraphics[width=0.3\textwidth]{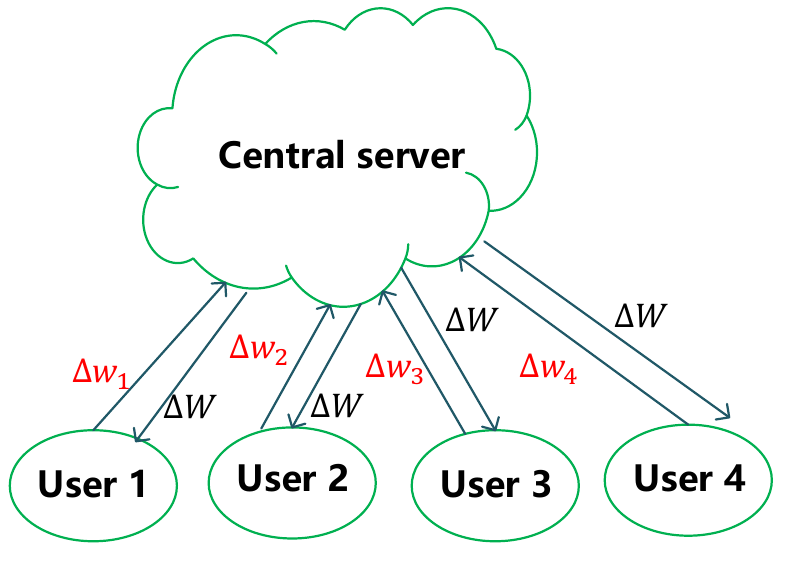}}
  \subfigure[]{\includegraphics[width=0.3\textwidth]{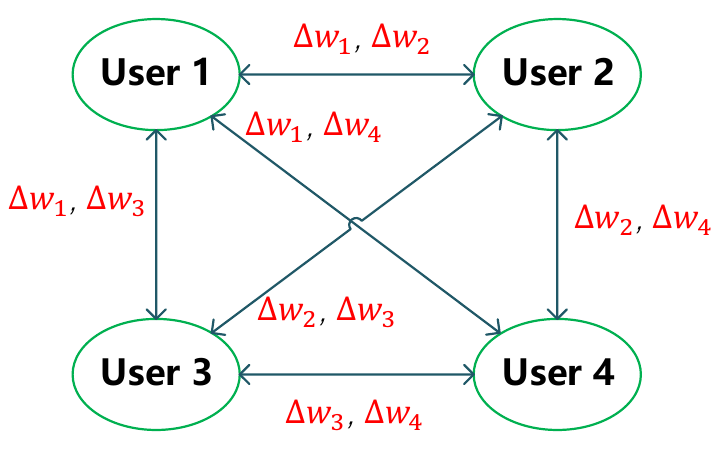}}
  \caption{Different types of training systems. (a) Centralized learning: \textnormal{ each user $i$ uploads its dataset $D_i$ to a central server, which trains a specific model in a centralized manner}.  (b) Federated learning: \textnormal{ each user train a local model with its own dataset. A central server is introduced to aggregate the gradients $\Delta w_i$ uploaded by each user $i$. Then, each user updates the local model using the aggregated value (global parameter $\Delta W$) returned by the server.}  (c) Decentralized learning: \textnormal{each user trains its own local model, and exchanges gradients with other users interconnected with it in the network. Meanwhile, it collects the gradients from the neighbors, aggregates them and updates its local model.}}
  \label{Fig:Different types of training frameworks}
\end{figure*}

To alleviate the above problem, one potential way is to split and distribute the training task to the users, who only need to train the models locally and then share the gradients without disclosing their private data.  As a result, untrusted third parties do not have access to the users' data, and the privacy risk of training data is effectively reduced. For instance, in federated learning, a central server collects the gradients from all these users, aggregates them and distributes the new gradient to each user (Figure ~\ref{Fig:Different types of training frameworks}(b)). In a decentralized learning system, the central server is eliminated, so each user autonomously exchanges gradients with its neighbors and updates its model (Figure~\ref{Fig:Different types of training frameworks}(c)).

Even though the users do not release their training samples in these distributed systems, the adversary can still infer the attributes of these samples  or even preciously reconstruct the original samples \cite{LuoWXO21,NEURIPS2019} from the shared gradients.
This threat is more severe in decentralized learning than federated learning \cite{cheng2019towards,xiao2019local}, as any user connected to an honest user can access its gradients and compromise its data privacy, making the potential attack domains and methods more diversified and concealed.
Since decentralized learning exhibits higher robustness to network delays and single point of failures, it becomes more promising and suitable for distributed training with large numbers of users. Therefore, it is necessary to have an efficient and privacy-preserving decentralized deep learning framework. Unfortunately, privacy protection of decentralized learning systems is still in its infancy. Although a wealth of works have been proposed to protect the privacy of federated learning \cite{WuCXCO20,sav2020poseidon,zheng2019helen,froelicher2020scalable}, they cannot be easily extended to the decentralized learning scenario due to its unique network topology and gradient propagation mechanism.


In particular, existing privacy-preserving deep learning solutions are mainly evolved from the following three technologies. (1) \textit{Differential Privacy} \cite{fc4ddc15,Li2020Differentially}: this approach adds controllable noise to the users' data, gradients or intermediate values to obfuscate the adversary's observations while maintaining the training accuracy. However, it  is  still  unclear  whether  differential  neural network  training  can  provide  a  satisfactory utility-privacy trade-off in practical scenarios. (2) \textit{Secure Multi-Party Computation} (MPC) \cite{kumar2020cryptflow,mohassel2018aby3}: this approach enables multiple entities to securely compute arbitrary functions without revealing their secret inputs. It has been widely used in centralized learning and federated learning systems \cite{DChaudhariRS20,agrawal2019quotient,patra2020blaze,rathee2020cryptflow2}. However, it is hard to be grafted to the decentralized scenario due to the lack of central servers, and executing the secret sharing protocols across users is rather inefficient. (3) \textit{Homomorphic Encryption} (HE) \cite{zheng2019helen,sav2020poseidon}: this approach enables the calculation of arbitrary (approximate) polynomial functions in a ciphertext environment without the need for decryption. It has been widely used in private deep learning \cite{zhanggala,chen2019efficient,sav2020poseidon,zheng2019helen}. However, it requires expensive calculations for function evaluation under ciphertext, which can significantly affect the efficiency of decentralized learning. More analysis about the limitations of these solutions is given in Section \ref{Related Works and  Challenges}.

Driven by the above limitations, our goal is to remedy the gap in the practicality of decentralized learning for protecting the training data privacy. We propose \textbf{D$^{2}$-MHE},  a practical, privacy-preserving and high-fidelity decentralized training framework. \textbf{D$^{2}$-MHE} is built based on the HE technology with innovations to address the computational bottleneck of ciphertext operation and distributed decryption. We explore the benefits of a state-of-the-art HE method, Brakerski-Fan-Vercauteren (BFV) \cite{fan2012somewhat}, and extend it to a multiparty version for privacy-preserving decentralized learning.

Specifically, the BFV cryptosystem is a fully homomorphic encryption scheme based on the Ring-learning with error (RLWE)  problem. It supports both addition and multiplication operations in ciphertext. Compared with the standard BFV, the main difference of our multiparty version  is that the decryption capability is divided into $N$ users. It means that the public key $pk$ used for encryption is disclosed to all the users, while the secret key $sk$ is  divided into $N$ shares and can only be recovered with the collaboration of these $N$ users. As a result, to construct such a multiparty version, all the algorithms requiring $sk$ as input need to be modified to meet the needs of distributed decryption. In detail,  we construct four new functions ($\mathsf{MBFV\cdot SecKeyGen}$, $\mathsf{MBFV \cdot PubKeyGen}$, $\mathsf{MBFV\cdot Bootstrap}$ and $\mathsf{MBFV\cdot Convert}$)  based on the standard BFV cryptosystem, to support system secret key generation, public key generation, distributed bootstrapping,  and ciphertext conversion in a decentralized environment, respectively (See Section~\ref{Multiparty BFV for D-PSGD} for more details). These four novel functions in \textbf{D$^{2}$-MHE} can satisfy the following properties.

First, all these constructed functions are bound to the
given NP-hard problem,  to ensure the semantic security of the \textbf{D$^{2}$-MHE} cryptosystem. Second,
the BFV cryptosystem reduces the security of the scheme to the famous NP-hard problem (i.e., Decision-RLWE \cite{cheon2017homomorphic})  by adding controllable noise in the ciphertext. The noise needs to be erased in the decryption process to ensure the correctness of the decryption.  \textbf{D$^{2}$-MHE} follows such security guidelines, but accumulates more noise in the process of generating the public key  $pk$  (See Section~\ref{Multiparty BFV for D-PSGD}). Moreover, this accumulated noise will be transferred to other operations requiring $pk$ as the input. Therefore, to ensure the correctness of the decryption, we carefully control the scale of the noise added to the newly constructed functions. Third,
in the standard BFV, the decryption is performed by a party holding the secret key. However, in \textbf{D$^{2}$-MHE},  this must be done without revealing $sk$.  Obviously, once $sk$ is revealed, all the local gradients that users previously encrypted with $pk$ will be leaked. To achieve this, we design a new method to realize ciphertext conversion \cite{ducas2015fhew} in BFV, i.e., converting a ciphertext  originally encrypted under the system public key $pk$ into a new ciphertext  under the recipient's public key $pk'$.


To the best of our knowledge, \textbf{D$^{2}$-MHE} is the first work to accelerate the performance of decentralized learning by using cryptographic primitives. It provides the best accuracy-performance trade-off compared to existing works. Our contributions can be summarized as follows:

\begin{itemize}[leftmargin=*]
\item  We design a novel decentralized training framework \textbf{D$^{2}$-MHE} with the multiparty homomorphic encryption. Compared with  existing works, it reduces the communication overhead of each round of gradient update from quadratic to linear in the number of users without sacrificing the accuracy of the original model.

\item We provide a rigorous security proof for \textbf{D$^{2}$-MHE}. Theoretical analysis shows that \textbf{D$^{2}$-MHE} can provide semantic security even if most of the users participating in the training are dishonest and collude with each other.
\item We conduct extensive experiments on  MNIST and CIFAR-10 to demonstrate the  the superiority of \textbf{D$^{2}$-MHE} in performance, and the advantages of communication and computation overhead compared with existing similar schemes.

\end{itemize}

The remainder of this paper is organized as follows. Section \ref{Related Works and  Challenges} discusses related works about privacy-preserving solutions and limitations. In  Section \ref{sec:PROBLEM STATEMENT}, we review some basic concepts and introduce the scenarios and threat models in this paper. In  Section \ref{subsub:PROPOSED SCHEME}, we give the details of our  \textbf{D$^{2}$-MHE}.  Security analysis and performance evaluation are presented in Sections \ref{Security Analysis} and \ref{sec:PERFORMANCE EVALUATION}, respectively. Section \ref{sec:conclusion} concludes the paper.

\section{Related Works}
\label{Related Works and  Challenges}
We review the existing privacy-preserving solutions for deep learning, which can be classified into three categories.

\subsection{Differential Privacy (DP)}
Differential privacy mainly relies on adding controllable noise to the user's local data, gradient or intermediate value, to realize the confusion of user data but ensure the training performance \cite{yu2019differentially,mckenna2019graphical}. Several works \cite{BernauEGKK21,cheng2019towards,xiao2019local, bellet2018personalized,cheng2018leasgd,xu2020dp}  have been designed for decentralized training scenarios. For example, \textit{Cheng et al.}\cite{cheng2019towards} propose  LEASGD, which achieves a predetermined privacy budget by adding random noise to the users' local gradients, and calibrates the scaling of noise by analyzing the sensitivity of the update function in the algorithm.  \textit{Bellet et al. } \cite{bellet2018personalized} also design a completely decentralized algorithm to solve the problem of personalized optimization, and use differential privacy to protect the privacy of user data. Other works, like A$(DP)^{2}$SGD \cite{xu2020dp} and ADMM \cite{xiao2019local},  implement the perturbation of each user's gradients with similar tricks.

\textbf{Limitations}: It is still unclear whether differential neural network training can provide a satisfactory utility-privacy trade-off for common models. This stems from the inherent shortcomings of differential privacy: achieving a strong privacy protection level requires to inject a large amount of noise during the model training, which inevitably reduces the model accuracy (See Section\ref{sec:Accuracy}). Besides,  the state-of-the-art results \cite{jayaraman2019evaluating, hitaj2017deep} also show that the current differential privacy technology is only effective for simple machine learning models, and can rarely provide satisfactory accuracy-privacy trade-off for practical neural network models.

\subsection{Secure Multi-Party Computation (MPC)}
MPC allows multiple participants to securely compute arbitrary functions without releasing their secret inputs \cite{damgaard2019new,riazi2019xonn}.  It has been widely used in conventional deep learning scenarios, including centralized learning and federated learning \cite{DChaudhariRS20,agrawal2019quotient,patra2020blaze,rathee2020cryptflow2}.  Most of these efforts rely on users to secretly share (utilizing Shamir's Secret-Sharing  \cite{benhamouda2021local} or Additive Secret-Sharing \cite{mouchet2020multiparty}) local data or gradients to two or more servers, and require an honest majority to perform deep learning training and prediction without collusion. In this way, frequent secret sharing between users is avoided, and the complexity of communication overhead is reduced from O$(N^2)$ to O($S^{2}$), where $N$ and $S$ represent the numbers of users and servers, respectively.

\textbf{Limitations}:  It is convincing to explore MPC-based protocols under centralized or federal learning, because third-party servers naturally exist in these scenarios. However, grafting MPC to a decentralized scenario has the following limitations.
(1) Decentralized learning abandons the central servers to avoid single point of failure and communication bottlenecks. As a result, it is conflicting to transplant the existing MPC-based training mechanism to a decentralized mode.  A trivial idea to alleviate this problem is to
execute the secret sharing protocol between users directly, which is rather inefficient as each user needs to perform $N-1$ interactions for secret sharing at each iteration \cite{NEURIPS2020_5bf8aaef} (refer to Section~\ref{sec:Communication Overhead}).
(2) The existing MPC technology generally requires that most of the entities involved in the calculation are honest and will not collude with each other \cite{DChaudhariRS20,agrawal2019quotient,patra2020blaze,rathee2020cryptflow2}. This is to ensure the smooth execution of calculations.  In other words, if a majority of entities are dishonest and collude with each other, there is a high probability that execution will be terminated or errors will occur.  However, a strong security framework should be able to withstand attacks from adversarial collusion. In a decentralized scenario, the need for such security guarantee is more urgent, because any user can obtain the gradient of other users connected to it, and then easily collude with some malicious users to break the privacy of the target user.

\subsection{Homomorphic Encryption (HE)}
 (Fully) homomorphic encryption can achieve the calculation of arbitrary (approximate) polynomial functions under ciphertext without the need for decryption \cite{chillotti2020tfhe,chen2019improved}. Such attractive nature makes it widely used in private deep learning \cite{zhanggala,chen2019efficient,sav2020poseidon,zheng2019helen}. Informally, we can divide HE into the following two types with different decryption methods: (1) standard HE \cite{van2010fully,gentry2012fully} is mainly used for model inference, where the public key is released to all the participants, while the secret key is only held by the decryptor (e.g., the user).  (2) In threshold-based HE \cite{chen2019multi,genise2019homomorphic,cramer2001multiparty}, the secret key is securely shared with multiple entities. As a result, each entity still performs function evaluation under the same public key, while the decryption of the result requires the participation of the number of entities exceeding the threshold. Several threshold-based HE variants \cite{cramer2001multiparty,chen2019multi, boneh2018threshold} have been used in the federated learning scenario, and one of the most representative is the  threshold Paillier-HE \cite{cramer2001multiparty}. For example, \textit{Zheng et al.} \cite{zheng2019helen} propose Helen, the first secure federated training system utilizing the threshold Paillier-HE. In Helen,  each user's data are encrypted with  Paillier-HE and  submitted to an ``Aggregator", which is responsible for performing aggregation.  Then,  the Aggregator broadcasts the  aggregated  results to all the users to update the local model parameters.
 When the trained model reaches the preset convergence condition, the model parameters can be decrypted through the collaboration of multiple users without revealing the original private key.

\textbf{Limitations}:  Threshold Paillier-HE requires substantial modular exponential operations for function evaluation under ciphertext, and requires expensive calculations among multiple users for decryption.  In a decentralized scenario, each user receives gradients from neighboring users and aggregates them to update its local model. This inevitably produces worse performance if threshold Paillier-HE is simply used as its underlying architecture { please refer to Section \ref{sec:PERFORMANCE EVALUATION} for more details}. Other variants like Threshold Fully Homomorphic Encryption (TFHE) \cite{chen2019multi, boneh2018threshold},  are possibly applicable to the  decentralized environment. However,  TFHE can only encrypt one bit at a time, which is obviously unrealistic to achieve practical training.

\textbf{Remark 1}: Based on the above discussions, we argue that differential privacy-based and MPC-based approaches are contrary to our motivation and  the characteristics of decentralized learning systems. In contrast, threshold-based HE seems to be more promising, if it can be freed from the computational bottleneck of ciphertext operation and distributed decryption. Inspired by this, this paper focus on exploring the benefits of a state-of-the-art HE method, Brakerski-Fan-Vercauteren (BFV), and the possibilities to extend it to a multiparty version for privacy-preserving decentralized learning.

\section{Preliminaries}
\label{sec:PROBLEM STATEMENT}
In this section, we first review some basic concepts about decentralized parallel stochastic algorithms and BFV Homomorphic Encryption. Then,  we  describe the threat model and  privacy requirements considered in this paper.

\subsection{Decentralized Parallel Stochastic Algorithms}
\label{sub:Decentralized parallel stochastic algorithms}
As shown in Figure~\ref{Fig:Different types of training frameworks}(c), a decentralized system can be represented as an undirected graph $(V, E)$, where $V$ denotes a set of $N$ nodes in the graph (i.e., users in the system\footnote{In this paper we use the terminologies of node and user interchangeably.}), and $E$ denotes a set of edges representing communication links. We have $(i, j) \in E$ if and only if node $i$ can receive information from node $j$. $\mathcal{N}_i=\{j| (i, j) \in E\}$ represents the set of all nodes connected to node $i$. $E\in \mathbb{R}^{N\times N}$ is a symmetric doubly stochastic matrix to denote the training dependency of two nodes. It has the following two properties: (i) $E_{i,j}\in[0,1]$ and (ii) $\sum_{j}E_{i,j}=1$ for all $i$. Commonly for a node $i$, we can set $E_{i,j}=0$ if node $j\notin \mathcal{N}_i$ and $E_{i,j}=1/|\mathcal{N}_i|$ otherwise.


For neural network training under such a decentralized system, all nodes are required to optimize the following function \cite{vogels2020practical,Koloskova2020Decentralized}:

\begin{small}
 \begin{equation}
\begin{split}
 \min_{W\in\mathbb{R}^{H}}G(W)=\frac{1}{N}\sum_{i=1}^{N}\mathbb{E}_{{X}\sim D_{i}}L_i(W, X)
\end{split}
\end{equation}
\end{small}

where $W\in\mathbb{R}^{H}$ denotes the parameters of the target model. The distribution of training samples for each user is denoted as $D_i$, and $X\in \mathbb{R}^{M}$ represents  a training sample from the distribution.  For each node $i$, $L_i(W,X)=L(W, X)$ denotes the loss function.

 Decentralized parallel stochastic gradient descent algorithm (D-PSGD) \cite{Koloskova2020Decentralized} is usually used to solve the above optimization problem. Its main idea is to update the local model by requiring each user to exchange gradients with their neighbors (technical details are shown in \textbf{Algorithm~\ref{algorithm 1}}). In this paper, our goal is to implement D-PSGD in a privacy-preserving way.

\begin{algorithm}
\begin{small}
\caption{Decentralized parallel stochastic gradient descent}
\label{algorithm 1}
\begin{algorithmic}[1]
\REQUIRE Initialize $W_{0,i}=W_0$, matrix $E$, step length $\eta$, and the number of iterations $K$.
\FOR {$k=0, 1, 2, \cdots K-1$}
\STATE Each node $i$ performs the following operations in parallel:
\STATE Randomly select sample $X_{k,i}$ from the local dataset and calculate local stochastic gradient $\Delta W_i=\triangledown L_i(W_{k,i}, X_{k,i})$. Note that we can use mini-batch of stochastic gradients to accelerate this process without hurting accuracy.
\STATE Obtain the current gradient $W_{k,j}$ of all nodes in $\mathcal{N}_i$, and calculate the weighted average as follows\footnotemark[2]:
\begin{equation*}
\begin{split}
W_{k+\frac{1}{2}, i}=E_{i,i}W_{k,i}+\sum_{j\in \mathcal{N}_i}E_{i,j}W_{k,j}
\end{split}
\end{equation*}
\STATE Update local model parameters $W_{k+1,i}\leftarrow W_{k+\frac{1}{2}, i}-\eta \Delta W_i$
\STATE Broadcast $W_{k+1,i}$ to all nodes in $\mathcal{N}_i$.
\ENDFOR
\ENSURE  $\frac{1}{N}\sum_{i=1}^{N}W_{K, i}$.
\end{algorithmic}
\end{small}
\end{algorithm}
\footnotetext[2]{Lines 3 and 4 can be run in parallel.}
\subsection{BFV Homomorphic Encryption }
\label{sub:Decentralized parallel stochastic algorithms}
The BFV cryptosystem \cite{fan2012somewhat} is a fully homomorphic encryption scheme based on the Ring-learning with error (RLWE)  problem. It supports both addition and multiplication operations in ciphertext. In this section, we briefly introduce the basic principles of the standard BFV algorithm used in the centralized scenario. In Section~\ref{subsub:PROPOSED SCHEME}, we will explain in detail how to convert this standard BFV to the multiparty version and enable decentralized training.

Suppose the ciphertext space is composed of polynomial ring $R_q=\mathbb{Z}_q[\mathtt{X}]/(\mathtt{X}^n+1)$, and the  quotient ring of polynomials with coefficients in $\mathbb{Z}_q$, where $\mathtt{X}^n+1$ is a monic irreducible polynomial with degree of $n=2^b$. The set of integers in $(\frac{-q}{2},\frac{q}{2}]$ is used to denote the representatives for the congruence classes modulo
$q$.  Similarly,   the plaintext space is denoted as the ring $R_t=\mathbb{Z}_t[\mathtt{X}]/(\mathtt{X}^n+1)$ where $t<q$. We use $\Lambda=\left\lfloor q/t\right\rfloor$ to represent the integer division of $q$ by $t$. Unless otherwise stated, we consider the arithmetic under $R_q$.  Therefore, the symbol of polynomial reductions is sometimes omitted in the BFV execution.   Informally, the standard BFV encryption system consists of the following five algorithms.

1. $\mathsf{BFV\cdot SecKeyGen}(1^\lambda)\rightarrow sk$: Given the security parameter $\lambda$, this algorithm selects an element $s$ uniformly on the polynomial ring $R_3=\mathbb{Z}_3[\mathtt{X}]/(\mathtt{X}^n+1)$,  where the coefficients of every polynomial in $R_3$ are uniformly distributed in $\{-1, 0, 1\}$. Then, it outputs the secret key $sk=s$.

2. $\mathsf{BFV\cdot PubKeyGen}(sk)\rightarrow pk$: Given the secret key $s$, this algorithm selects an element $p_1$ uniformly on the polynomial ring $R_q$, and an error term $e$ from  $\chi$.  $\chi$ is a distribution over $R_q$ with coefficients obeying the central discrete Gaussian with standard deviation $\sigma$ and truncated to support  over $[B, B]$. Then, it outputs $pk=(p_0, p_1)=(-(sp_1+e), p_1)$.

3. $\mathsf{BFV\cdot Encrypt}(pk, x)\rightarrow ct$: Given the public key $pk=(p_0, p_1)$, this algorithm samples an element $\mu$ uniformly from $R_3$, and two  error terms $e_0$, $e_1$ from $\sigma$. Then, it outputs the ciphertext $ct=(\Lambda x+\mu p_0+e_0, \mu p_1+e_1)$.

4. $\mathsf{BFV\cdot Eval}$ $(pk, f, ct_1,$$ ct_2, \cdots, ct_N)\rightarrow ct'$: Given the public key $pk$, the function $f$ to be evaluated, and  $N(N\geq 1)$ ciphertext inputs $(ct_1,$$ ct_2, \cdots, ct_N)$, this algorithm outputs the ciphertext  result $ct'$. Note that since BFV supports any numbers of addition and multiplication operations in ciphertext, it is feasible to securely evaluate a function $f$ that can be (approximately) parsed as a polynomial. To achieve this, BFV uses  $\mathsf{BFV\cdot Add}$ and $\mathsf{BFV\cdot Mul}$  operations to perform homomorphic addition and multiplication  respectively, and uses the \textit{relinearization key (rlk)} to ensure the consistency of the ciphertext form after each multiplication. In addition, $\mathsf{BFV\cdot Bootstrap}$ is used to reduce the noise of the ciphertext  back to a fresh-like one, which enables further calculations even if the noise of the current ciphertext reaches the limit of the homomorphic capacity.  Readers can refer to \cite{fan2012somewhat} for more details.

5. $\mathsf{BFV\cdot Decrypt}(sk, ct)\rightarrow x$: Given the secret key $s$ and the ciphertext $ct=(c_0, c_1)$, this algorithm outputs the decrypted plaintext $x=\left[\left\lfloor \frac{t}{q}[c_0+c_1s]_q\right\rceil\right]_t$, where $[c_0+c_1s]_q$ denotes $c_0+c_1s \mod q$.

 The security of the BFV cryptosystem is reduced to the famous \textit{Decisional-RLWE Problem} \cite{cheon2017homomorphic}. Informally, given a random $a$, a secret key $s$ and an error term  $e$ uniformly sampled from $R_q$,  $R_3$ and $\chi$, respectively, it is computationally difficult for an adversary to distinguish the two distributions ($sa+e, a$) and $(g, a)$ without the knowledge of $s$ and $e$, where $g$ is  uniformly sampled from $R_q$.

\subsection{Threat Model and Privacy Requirement }
\label{sub:Thread Model and Privacy Requirements}
As shown in Figure \ref{Fig:Different types of training frameworks}(c), we consider a decentralized learning system with $N$ users. Each user $i$ with a local dataset $D_i$ adopts the D-PSGD algorithm to collaboratively train a deep learning model with others. In this paper, each user is considered to be honest but curious \cite{JuvekarVC18,mohassel2018aby3,Pattuk2016}, i.e., they follow the agreed procedure to perform the training task, but may try to obtain the private data (i.e., gradients) of other users along with the collected prior knowledge. As a result, attacks from malicious adversaries by violating the execution of the protocol are beyond the scope of this paper. Such a threat model has been widely used in existing works about privacy-preserving machine learning \cite{zhanggala,riazi2019xonn,sav2020poseidon,NEURIPS2020_5bf8aaef}. Moreover,  we allow the collusion of the majority of users to enhance the attack capabilities. Specifically, for the union composed of user $i$ and its connected node set $\mathcal{N}_i$, i.e., $\mathcal{U}=i\cup \mathcal{N}_i$, collusion of at most $|\mathcal{U}-1|=|\mathcal{N}_i|$ users is allowed at any training stage.   Our goal is to protect the confidentiality of sensitive data (i.e., gradients) for each benign user. This means during the training process, we should guarantee that no user $i$ can learn the gradient $\Delta W_j$ of  any benign user $j$, except those can be inferred from its own input data $\Delta W_i$.

  \section{PROPOSED SCHEME}
\label{subsub:PROPOSED SCHEME}
We present a novel privacy-preserving framework, Decentralized Deep Learning with Multiparty Homomorphic Encryption (\textbf{D$^{2}$-MHE}), which enables $N$ users to train the target model collaboratively under the decentralized network. We first give the overview of \textbf{D$^{2}$-MHE} for implementing the D-PSGD algorithm with the multiparty version of BFV, and then further explain the detailed algorithms.

 \subsection{Overview}
 \begin{figure*}
  \centering
 \includegraphics[width=1.0\textwidth]{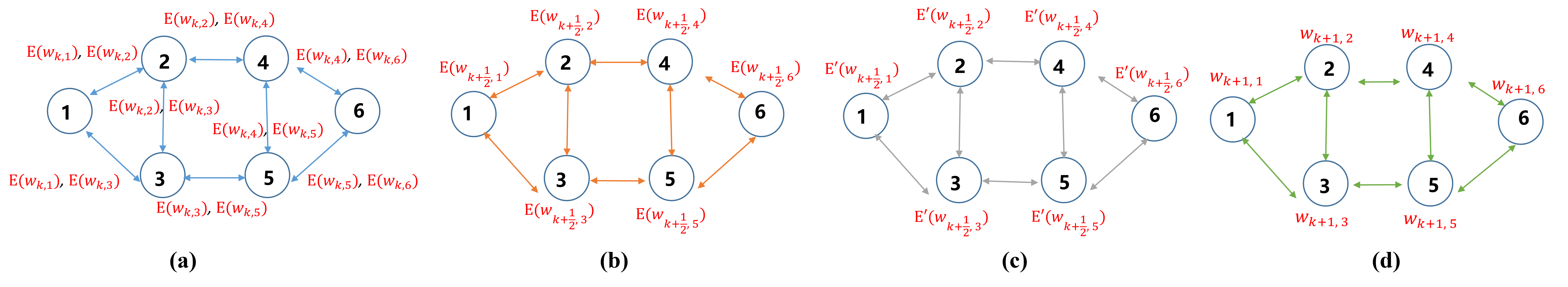}
  \caption{ A high-level view of our \textbf{D$^{2}$-MHE}. (a) Step 1: Exchange encrypted local parameters with connected nodes.(b) Step 2: Each node locally aggregates parameters from other nodes. (c) Step 3: Each node interacts with connected nodes to convert the local aggregated value into ciphertext under its own public key. (d) Step 4: Each node decrypts the aggregated value and updates the local parameters.}
  \label{Fig:A high level view}
\end{figure*}
Essentially, in \textbf{D$^{2}$-MHE}, users iteratively execute the D-PSGD algorithm with the multiparty BFV cryptosystem.  The complete algorithm is shown in \textbf{Algorithm~\ref{algorithm 2}}, where the newly constructed functions (marked in red) are introduced to convert the standard BFV into a multiparty version. During decentralized training, each node requires additional operations to securely execute the D-PSGD compared to the original algorithm, including generating additional variables for encryption/decryption, and modifying certain operations  to implement the ciphertext calculations. We describe each step in Algorithm~\ref{algorithm 2} as well as Figure~\ref{Fig:A high level view}.

\begin{algorithm*}
\begin{small}
\caption{Privacy-preserving D-PSGD}
\label{algorithm 2}
\begin{algorithmic}[1]
\REQUIRE  Each node $i$ independently generates its secret key $sk_i\leftarrow \mathsf{BFV\cdot SecKeyGen}(1^\lambda)$ and public key $pk_i \leftarrow \mathsf{BFV\cdot PubKeyGen}(sk_i)$. For each set $\mathcal{N}_i, i\in [1,N]$, nodes $i$ and $j \in \mathcal{N}_i$ generate shares  \textcolor{red}{$(s_i, s_j|j\in \mathcal{N}_i) \leftarrow \mathsf{MBFV\cdot SecKeyGen}(1^\lambda)$} of the system secret key $sk=s$, where we have $s=s_i+\sum_{j\in\mathcal{N}_i}s_j$. Then,  initialize $W_{0,i}=W_0$, matrix $E$, step length $\eta$, and the number of iterations $K$.
\STATE  Each node $i$ cooperates with the nodes in $\mathcal{N}_i$ to generate the system public key  \textcolor{red} {$pk \leftarrow \mathsf{MBFV\cdot PubKeyGen}(s_i, s_j|j\in \mathcal{N}_i)$}
\STATE Each node $i$ broadcasts its public key $pk_i$ to all nodes $j\in \mathcal{N}_i$.
\FOR{$k=0, 1, 2, \cdots K-1$}
\STATE \COMMENT{Each node $i$ performs the following operations in parallel:}
\STATE Randomly select sample $X_{k,i}$ from the local dataset and calculate local stochastic gradient $\Delta W_i=\triangledown L_i(W_{k,i}, X_{k,i})$
\STATE Obtain the current encrypted gradient $\mathsf{E}(W_{k,j})\leftarrow \mathsf{BFV\cdot Encrypt}(pk, W_{k,j})$ of all nodes $j$ in $\mathcal{N}_i$, and calculate the encrypted weighted average with the help of the algorithm \textcolor{red}{$\mathsf{MBFV\cdot Bootstrap}(ct, s_i, s_j|j\in \mathcal{N}_i)$} as follows:
\begin{equation*}
\begin{split}
\mathsf{E}(W_{k+\frac{1}{2}, i}) \leftarrow \mathsf{BFV\cdot Eval}(pk, f, \mathsf{E}(W_{k,i}), \mathsf{E}(W_{k,j})| j\in \mathcal{N}_i)
\end{split}
\end{equation*}
where $f=E_{i,i}W_{k,i}+\sum_{j\in \mathcal{N}_i}E_{i,j}W_{k,j}$.
\STATE Broadcast $\mathsf{E}(W_{k+\frac{1}{2}, i})$ to all nodes $j\in \mathcal{N}_i$.
\STATE Work with other nodes $j\in \mathcal{N}_i$ to convert  $\mathsf{E}(W_{k+\frac{1}{2}, i})$ into a new ciphertext  \textcolor{red}{$\mathsf{E}'(W_{k+\frac{1}{2}, i})\leftarrow \mathsf{MBFV\cdot Convert}(\mathsf{E}(W_{k+\frac{1}{2}, i}), pk_i, s_i, s_j|j\in \mathcal{N}_i)$} under the public key $pk_i$.
\STATE Decrypt $\mathsf{E}'(W_{k+\frac{1}{2}, i})$ as $W_{k+\frac{1}{2}, i}\leftarrow \mathsf{BFV\cdot Decrypt}(sk_i, \mathsf{E}'(W_{k+\frac{1}{2}, i}))$.
 \STATE Update local model parameters $W_{k+1,i}\leftarrow W_{k+\frac{1}{2}, i}-\eta \Delta W_i$.
\STATE Broadcast $\mathsf{E}(W_{k+1,i})\leftarrow \mathsf{BFV\cdot Encrypt}(pk, W_{k+1,i})$ to all nodes in $\mathcal{N}_i$.
\ENDFOR
\ENSURE  $\frac{1}{N}\sum_{i=1}^{N}W_{K, i}$.
\end{algorithmic}
\end{small}
\end{algorithm*}

(1) In the initialization phase, each node $i$  generates its own secret key $sk_i$ and public key $pk_i$ utilizing the  standard BFV. Then, for each set $\mathcal{N}_i, i\in [1,N]$, a new function $\mathsf{MBFV\cdot SecKeyGen}$  is used to  generate  shares  $(s_i, s_j|j\in \mathcal{N}_i)$ of the system secret key $sk=s$, where  $s=s_i+\sum_{j\in\mathcal{N}_i}s_j$.  Then, a new function $\mathsf{MBFV\cdot PubKeyGen}$ is  exploited to generate the public key $pk$ corresponding to $s$.

(2) For each node $j$, instead of sending the original gradient $W_{k,j}$ to other neighboring nodes, it uses the standard $\mathsf{BFV\cdot Encrypt}$  to send the ciphertext $\mathsf{E}(W_{k,j})$ (Line 6 and Figure~\ref{Fig:A high level view}(a))  to all the connected users $i\in \mathcal{N}_j$.

(3) Each node $i$ computes the encrypted weighted average $\mathsf{E}(W_{k+\frac{1}{2}, i})$ with the help of a new function $\mathsf{MBFV\cdot Bootstrap}$ (Figure~\ref{Fig:A high level view}(b)), which is the multiparty bootstrapping procedure.  $\mathsf{MBFV\cdot Bootstrap}$ can  reduce the noise of a ciphertext (such as the intermediate value $ct$ in Line 6)  back to a fresh-like one, which enables further calculations even if the noise of the current ciphertext reaches the limit of homomorphic capacity.

(4) To  securely decrypt $\mathsf{E}(W_{k+\frac{1}{2}, i})$,  we construct a new function $\mathsf{MBFV\cdot Convert}$, which can  obliviously re-encrypt $\mathsf{E}(W_{k+\frac{1}{2}, i})$ that is originally encrypted under the system public key $pk$, into a new ciphertext $\mathsf{E}'(W_{k+\frac{1}{2}, i})$ under the recipient's public key $pk_i$ (Line 8 and Figure~\ref{Fig:A high level view}(c)).

(5) As a result, each node $i$ can decrypt $\mathsf{E}'(W_{k+\frac{1}{2}, i})$ with its secret key $sk_i$, and then update  its local model parameters (Lines 9-10  and Figure\ref{Fig:A high level view}(d)).

To sum up, in \textbf{D$^{2}$-MHE}, we construct four new functions ($\mathsf{MBFV\cdot SecKeyGen}$, $\mathsf{MBFV \cdot PubKeyGen}$, $\mathsf{MBFV\cdot Bootstrap}$, and $\mathsf{MBFV\cdot Convert}$)  based on the standard BFV cryptosystem, which are used   to support system secret key generation, public key construction, distributed bootstrapping procedure, and ciphertext conversion in a decentralized learning environment. Note that the functions $\mathsf{MBFV\cdot SecKeyGen}$ and $\mathsf{MBFV\cdot PubKeyGen}$ are only executed once during the entire training process. Besides, all nodes encrypt their local gradients under the same public key $pk$ and broadcast to other nodes.  As a result, compared with existing MPC-based works, where each node  needs to secretly share its gradients to all neighbors,  our method only requires each user to broadcast a ciphertext to all users. Therefore, from the perspective of the whole system,  \textbf{D$^{2}$-MHE} reduces the communication overhead of each round of gradient update  from quadratic to linear  without sacrificing the accuracy of the original model.

\textbf{Remark 2}: Our \textbf{D$^{2}$-MHE} is inspired by work \cite{mouchet2020multiparty}, which proposes a  cryptographic primitive called multiparty homomorphic encryption from ring-learning-with-errors. However, there are three major differences between \cite{mouchet2020multiparty} and our work, making \cite{mouchet2020multiparty} incompatible to our scenario. (1) \cite{mouchet2020multiparty} focuses on the construction of relinearization keys for multiparty HE, thereby ensuring the correctness of multiplication between ciphertexts. It's hard to apply it to our system which mainly consists of ciphertext aggregation operations rather than multiplications. (2) \cite{mouchet2020multiparty} considers to convert a ciphertext originally encrypted under the system secret key into a new ciphertext under the recipient's secret key which is securely shared with other users. In our system, we need access to the recipient's public key instead of the secret key. (3) \cite{mouchet2020multiparty} is mainly designed for scenarios such as private information-retrieval and private-set intersection, while our scheme is tailored to decentralized learning. Due to the above differences, compared with \cite{mouchet2020multiparty}, our proposed algorithms  are more concise and efficient, and is more suitable for decentralized deep learning.

 \subsection{Detailed Implementations of the Four Functions}
 \label{Multiparty BFV for D-PSGD}

We provide the details of the above four newly constructed functions in \textbf{D$^{2}$-MHE}. For readability,  we take the variables in \textbf{Algorithm~\ref{algorithm 2}} as inputs/outputs of the functions.

1. $\mathsf{MBFV\cdot SecKeyGen}(1^\lambda) \rightarrow (s_i, s_j|j\in \mathcal{N}_i)$:  Given a security parameter $\lambda$, this function generates  shares $(s_i, s_j|j\in \mathcal{N}_i)$ of the system secret key $sk=s$  for each set $\mathcal{N}_i, i\in [1,N]$. In this paper, we focus on additive secret sharing  \cite{mouchet2020multiparty} of the key, i.e., $s=s_i+\sum_{j\in\mathcal{N}_i}s_j$, which can also be replaced by the Shamir's threshold secret sharing \cite{benhamouda2021local} with less strict requirements.  We propose a simple method to implement $\mathsf{MBFV\cdot SecKeyGen}(1^\lambda)$, i.e., each user independently generates $s_i$ through the standard $\mathsf{BFV\cdot SecKeyGen}(1^\lambda)$.  Then, we can simply set $s=s_i+\sum_{j\in\mathcal{N}_i}s_j$. It means that $s$ is not set beforehand, but is determined when all nodes have generated their own shares.  The advantage of generating $s$ in this way is that each node $i$ does not need to share its $s_i$ with others. It is  a common way to generate a collected key and has been  proven to be secure \cite{cramer2001multiparty, boyle2015function}\footnote{The $s$ generated in this way may not conform to the property of being uniformly distributed under $R_3$. However, this is not a problem because our security proof (refer to Section~\ref{Security Analysis}) does not rely on this property. Also, there are many other ways \cite{zheng2019helen,mohassel2017secureml} to generate uniform keys that are subject to the distribution $R_3$, which require private channels between users.}.

2. $\mathsf{MBFV\cdot PubKeyGen}(s_i, s_j|j\in \mathcal{N}_i) \rightarrow pk $: This process is to emulate the standard $\mathsf{BFV\cdot PubKeyGen}$  procedure, i.e., generating the public key $pk=(p_0, p_1)$ corresponding to $s$. To achieve this, a public polynomial $p_1$, which is uniformly sampled in the distribution $R_q$, should be agreed in advance by all users. Then, each node $i$ and all nodes $j\in \mathcal{N}_i$  independently sample $e_i, e_j$ over the distribution $\chi$, and compute $p_{0,i}=-(p_1s_i+e_i)$, $p_{0,j}=-(p_1s_j+e_j)$, $j\in \mathcal{N}_i$. Next, each node $k\in{i\cup \mathcal{N}_i}$  broadcasts $p_{0,i}$ to other nodes. Hence, each node $k\in{i\cup \mathcal{N}_i}$ can construct the system public key by performing the following operations:

\begin{small}
\begin{equation*}
\begin{split}
pk=([\sum_{k\in{i\cup \mathcal{N}_i}}p_{0,k}]_q, p_1)= ([-(p_1\sum_{k\in{i\cup \mathcal{N}_i}}s_k+\sum_{k\in{i\cup \mathcal{N}_i}}e_k)]_q, p_1)
\end{split}
\end{equation*}
\end{small}
We observe that $pk$ generated in this way has the same form as the public key generated by the standard $\mathsf{BFV\cdot PubKeyGen}$, but with a larger norm of $||s||$ and $||e||$. The growth of norms is linear with $|\mathcal{N}_i|$, hence it is not a concern (proved in \cite{chen2019multi, boneh2018threshold}), even for a large number of  $|\mathcal{N}_i|$ (See discussion below).

3. {$\mathsf{MBFV\cdot Convert}(\mathsf{E}(W_{k+\frac{1}{2}, i}), pk_i, s_i, s_j|j\in \mathcal{N}_i)\rightarrow \mathsf{E}'(W_{k+\frac{1}{2}, i})$}: This function is used to convert  $\mathsf{E}(W_{k+\frac{1}{2}, i})$ into a new ciphertext  $\mathsf{E}'(W_{k+\frac{1}{2}, i})$ under the public key $pk_i$. As a result, node $i$ can decrypt it with its secret key without accessing the system secret key $s$. To achieve this, given  node $i$'s public key $pk_i=({p_{0,i}, p_{1,i}})$, and  $\mathsf{E}(W_{k+\frac{1}{2}, i})=(c_0, c_1)$,  each node $k\in{i\cup \mathcal{N}_i}$ samples $\mu_k$, $e_{0,k}$, $e_{1,k}$ over the distribution $\chi$ and executes the following operations:
\begin{small}
\begin{equation}
\begin{split}
(h_{0,k}, h_{1,k})=(s_kc_1+\mu_k p_{0,i}+e_{0,k}, \mu_k p_{1,i}+e_{1,k})
\end{split}
\end{equation}
\end{small}

Then, each $(h_{0,k}, h_{1,k})$ is submitted to node $i$. Afterwards, node $i$  first computes $h_0=\sum_{k\in{i\cup \mathcal{N}_i}}h_{0,k}$, and $h_1=\sum_{k\in{i\cup \mathcal{N}_i}}h_{1,k}$, and then generates the new ciphertext  $\mathsf{E}'(W_{k+\frac{1}{2}, i})=(c_0', c_1')$=$(c_0+h_0, h_1)$.

The correctness of $\mathsf{MBFV\cdot Convert}$ is shown as follows:  Given  node $i$'s public key $pk_i=({p_{0,i}, p_{1,i}})$, and  $\mathsf{E}(W_{k+\frac{1}{2}, i})=(c_0, c_1)$, where $c_0+sc_1=\Delta m +e_{ct}$, $p_{0,i}=-(sk_ip_{1,i}+e_{cl})$, we have

\begin{small}
\begin{equation}
\begin{split}
&\mathsf{BFV\cdot Decrypt}(sk_i, \mathsf{E}'(W_{k+\frac{1}{2}, i}))\\
&=\lfloor\frac{t}{q}[c_0+\sum_{k\in{i\cup \mathcal{N}_i}}(s_kc_1+\mu_k p_{0,i}+e_{0,k})\\
&+sk_i\sum_{k\in{i\cup \mathcal{N}_i}}(\mu_k p_{1,i}+e_{1,k})]_q \rceil\\
&=\lfloor \frac{t}{q} [c_0+sc_1+\mu p_{0,i}+sk_i \mu p_{1,i}+e_0+sk_ie_1]_q\rceil\\
&=\lfloor \frac{t}{q} [\Lambda W_{k+\frac{1}{2}}+e_{ct}+e_{Covt} ]_q\rceil\\
&=W_{k+\frac{1}{2}}
\end{split}
\end{equation}
\end{small}

where $e_d=\sum_{k\in{i\cup \mathcal{N}_i}}e_{d,k}$ for $d=0,1$. $\mu=\sum_{k\in{i\cup \mathcal{N}_i}}\mu_k$. Therefore, the additional noise involved in $\mathsf{MBFV\cdot Convert}$ is $e_{Covt}=e_0+sk_ie_1+\mu e_{cl}$, which needs to satisfy the condition of $||e_{ct}+e_{Covt}||<q/(2t)$ for the correctness of decryption.

4. $\mathsf{MBFV\cdot Bootstrap}(ct, s_i, s_j|j\in \mathcal{N}_i) \rightarrow ct'$:  This is the multiparty bootstrapping procedure. It  can  reduce the noise of a ciphertext $ct$  back to a fresh-like one $ct'$, and then enables further calculations  if the noise of the current ciphertext reaches the limit of homomorphic capacity.  Specifically, given a ciphertext $ct=(c_0, c_1)$ with noise variance $\sigma_{ct}^{2}$, a common random polynomial $\alpha$,  each node $k\in{i\cup \mathcal{N}_i}$ samples $M_k$ over $R_t$, $e_{0,k}$, $e_{1,k}$ over $\chi$, and executes the following operations:

\begin{small}
\begin{equation}
\begin{split}
(\eta_{0,k}, \eta_{1,k})=(s_kc_1-\Lambda M_k +e_{0,k}, -s_k\alpha+\Lambda M_k+e_{1,k})
\end{split}
\end{equation}
\end{small}
Then, each $(\eta_{0,k}, \eta_{1,k})$ is submitted to node $i$. Afterwards, node $i$  first computes $\eta_0=\sum_{k\in{i\cup \mathcal{N}_i}}\eta_{0,k}$, and $\eta_1=\sum_{k\in{i\cup \mathcal{N}_i}}\eta_{1,k}$, and then generates the new ciphertext  {$ct'$=($\left[\left\lfloor \frac{t}{q}[c_0+\eta_0]_q\right\rceil\right]_t\Lambda+\eta_1, \alpha$) with noise variance $N\sigma^2$. \\ 

\subsection{Discussions}
\noindent\textbf{Noise analysis}. We can observe that the incurred noise is $\sum_{k \in{i\cup \mathcal{N}_i}}e_k$ in $\mathsf{MBFV\cdot PubKeyGen}$, and $e_{Covt}=e_0+sk_ie_1+\mu e_{cl}$ in $\mathsf{MBFV\cdot Convert}$.  All the noise is controllable since we can preset the range of $e$, $sk_i$ and $\mu$ (please refer to literature \cite{fan2012somewhat} for more theoretical analysis). This stems from our carefully constructed noise mechanism. Since the ciphertext has large size and is difficult to be removed during the decryption process,  our criterion is to keep the accumulated noise items without ciphertext.

\noindent\textbf{The utility of $\mathsf{MBFV\cdot Bootstrap}$}. The implementation of $\mathsf{MBFV\cdot Bootstrap}$ requires the interaction between multiple nodes, which will increase the communication overhead of each user.  However,  $\mathsf{MBFV\cdot Bootstrap}$ is rarely used in our scenarios. In detail,   according to the standard BFV, a ciphertext is correctly decrypted if the noise contained in the ciphertext satisfies  $||e_{ct}||<q/2t$, where $q$ and $t$ denote the ciphertext and plaintext spaces, respectively.  In comparison,  the  noise of a ciphertext involved in \textbf{D$^{2}$-MHE}  is  $e_{ct}+e_{Covt}\approx M \times e_{ct}$, where $M$  can be roughly parsed as a linear function of the average number of adjacent nodes of each node in the system. The noise scale will be further increased to $M \times T \times e_{ct}$, if $T$ times of homomorphic addition operations are performed without utilizing the bootstrapping.  Since the size of $||e_{ct}||$ is usually smaller than 1, for the correctness of decryption, it is enough to ensure that $M \times T<q/2t$.  Hence,  given a 64-bit plaintext space and 512-bit ciphertext space,  we only need to guarantee $M \times T<\frac{2^{512}}{2^{64}}=2^{448}$. Therefore, assuming $M=1024$, \textbf{D$^{2}$-MHE} can still perform $2^{438}$ consecutive homomorphic additions without the assistance of bootstrapping.

In summary, $\mathsf{MBFV\cdot Bootstrap}$ provides a trade-off between computation overhead and communication overhead. It is very practical for computing a function without the knowledge of the computation complexity in advance.

\section{Security Analysis}
\label{Security Analysis}
We now discuss the security of \textbf{D$^{2}$-MHE}.  It can be seen from Section~\ref{Multiparty BFV for D-PSGD} that compared with the standard BFV,  \textbf{D$^{2}$-MHE} constructs four new functions: $\mathsf{MBFV\cdot SecKeyGen}$, $\mathsf{MBFV \cdot PubKeyGen}$, $\mathsf{MBFV\cdot Bootstrap}$, and $\mathsf{MBFV\cdot Convert}$.  Since the implementation of $\mathsf{MBFV\cdot SecKeyGen}$ is essentially calling the standard $\mathsf{BFV\cdot SecKeyGen}$ multiple times, it inherits the security of the original algorithm. Therefore, this section focuses on the security of the other three functions. In addition, in \textbf{D$^{2}$-MHE},  each user $i$  interacts with the connected nodes $\mathcal{N}_i$, while being separated from other users in the system. Hence, we take the set $\mathcal{U}=i\cup \mathcal{N}_i$ as the object of discussion. In brief, the security of \textbf{D$^{2}$-MHE} is mainly tied to the \textit{Decisional-RLWE Problem} \cite{cheon2017homomorphic} and the property of Additive Secret-Sharing \cite{mouchet2020multiparty}. Here we provide arguments in a real/ideal simulation formalism \cite{canetti2014practical}.

Before explaining the details of the proof, we define some variables, which are useful for the subsequent descriptions. Specifically, suppose that the security parameter of \textbf{D$^{2}$-MHE} is $\lambda$, the adversary set is $\mathcal{A}\subseteq \mathcal{U}$, and $|\mathcal{A}|\leq |\mathcal{U}|-1$. $\mathsf{REAL}_{\mathcal{U}}^{\mathcal{U},\lambda}$ is a random variable used to refer to the joint view of all users in  $x_{\mathcal{U}}$, which contains all users' input in \textbf{D$^{2}$-MHE} and information received from other users. Since there is at least one honest user in the set $\mathcal{U}$, we define this honest user as $g_h$ for the convenience of description. The set $\mathcal{H}=\mathcal{U}\backslash(\mathcal{A} \cup g_h)$ represents other honest users. With these symbols, the sketch of our proof is that for any adversary set $\mathcal{A}$, when only the input and output of $\mathcal{A}$ are provided, there  exists a simulator  $\mathsf{SIM}$ with Probabilistic Polynomial Time (PPT) computation ability, which can simulate the view of $\mathcal{A}$, and make $\mathcal{A}$  unable to distinguish the real view from the simulated one.

\subsection{Analysis of $\mathsf{MBFV \cdot PubKeyGen}$}
\label{Security Analysis1}
We consider an adversary set $\mathcal{A}$ to  attack  $\mathsf{MBFV \cdot PubKeyGen}$} defined in Section~\ref{Multiparty BFV for D-PSGD}. For each user $k\in \mathcal{U}$, its private inputs are $s_k$, and $e_k$, and the output  received from the the function is the system public key $pk$. Therefore, given  $\mathcal{A}$'s inputs $\{ s_k, e_k\}, k\in \mathcal{A}$ and $pk=(p_0, p_1)$, the simulator needs to construct a simulated view which is indistinguishable from the adversary's view under the implementation of the real protocol.

\begin{theorem}
Given the security parameter $\lambda$, user set $\mathcal{U}$, adversary set $\mathcal{A}\subseteq \mathcal{U}$,$|\mathcal{A}|\leq |\mathcal{U}|-1$, $\mathcal{A}$'s inputs $\{ s_k, e_k\}_{k\in \mathcal{A}}$, $pk=(p_0, p_1)$, honest user $g_h$, and $\mathcal{H}=\mathcal{U}\backslash(\mathcal{A} \cup g_h)$, there exists a PPT simulator  $\mathsf{SIM}$, whose output  is indistinguishable from the real $\mathsf{REAL}_{\mathcal{U}}^{\mathcal{U},\lambda}$ output.
\begin{small}
\begin{equation*}
\begin{split}
& \mathsf{SIM}_{\mathcal{A}}^{\mathcal{U}, \lambda} (\{ s_k, e_k\}_{k\in \mathcal{A}}, pk)\\ \overset{c}{\equiv} & \mathsf{REAL}_{\mathcal{U}}^{\mathcal{U}, \lambda} (\{ s_j, e_j\}_{j\in \mathcal{U}}, pk)
\end{split}
\end{equation*}
\end{small}
\end{theorem}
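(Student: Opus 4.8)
The plan is to give a standard real/ideal simulation argument in which the simulator reproduces the honest parties' broadcast messages of $\mathsf{MBFV\cdot PubKeyGen}$ without knowing their secret-key shares or error terms. First I would write out the adversary's view $\mathsf{REAL}_{\mathcal{U}}^{\mathcal{U},\lambda}$ of an execution: it consists of the adversary's own inputs $\{s_k,e_k\}_{k\in\mathcal{A}}$, the publicly agreed polynomial $p_1$, the partial public keys $p_{0,k}=-(p_1 s_k+e_k)$ broadcast by the honest parties $k\in\mathcal{H}\cup\{g_h\}$, and the reconstructed output $pk=(p_0,p_1)$ with $p_0=\sum_{k\in\mathcal{U}}p_{0,k}$. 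The only objects the simulator cannot recompute on its own are the honest partial keys, so the task reduces to simulating the tuple $(\{p_{0,k}\}_{k\in\mathcal{H}},p_{0,g_h})$ consistently with the given $pk$.

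Next I would describe the simulator. For every $k\in\mathcal{H}$ it samples fresh $s_k'\leftarrow R_3$ and $e_k'\leftarrow\chi$ exactly as the protocol prescribes and sets $p_{0,k}'=-(p_1 s_k'+e_k')$; these are distributed identically to the real messages, since each honest share is drawn from the same distribution and enters the view only through $p_{0,k}$. For the distinguished honest party $g_h$ the simulator samples nothing and instead \emph{programs} its message as the unique residual consistent with the output: recomputing $p_{0,k}=-(p_1 s_k+e_k)$ for $k\in\mathcal{A}$ from the adversary's inputs, it sets $p_{0,g_h}':=p_0-\sum_{k\in\mathcal{A}}p_{0,k}-\sum_{k\in\mathcal{H}}p_{0,k}'$ so that $\sum_{k\in\mathcal{U}}p_{0,k}'=p_0$. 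The simulated view is $(\{s_k,e_k\}_{k\in\mathcal{A}},p_1,\{p_{0,k}'\}_{k\in\mathcal{H}},p_{0,g_h}',pk)$.

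Then I would establish indistinguishability through one short hybrid chain. Hybrid $\mathsf{H}_1$ is the real execution with $g_h$'s genuine message $-(p_1 s_{g_h}+e_{g_h})$ replaced by a uniformly random element of $R_q$ (with $p_0$ recomputed as the sum); since every other quantity in the view is independent of $(s_{g_h},e_{g_h})$, a distinguisher between $\mathsf{REAL}$ and $\mathsf{H}_1$ would break Decisional-RLWE with secret $s_{g_h}$, error $e_{g_h}$, and public element $p_1$. In $\mathsf{H}_1$ the uniform contribution of $g_h$ renders $p_0$ uniform and independent of the remaining honest shares, so $\mathsf{H}_1$ can be rewritten verbatim as: sample $\{p_{0,k}'\}_{k\in\mathcal{H}}$ honestly, sample $p_0$ uniformly, and take $p_{0,g_h}'$ to be the residual --- which is precisely $\mathsf{SIM}$ executed against a uniformly drawn public key. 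A final application of Decisional-RLWE, using that the true $p_0$ minus the adversarial and $\mathcal{H}$ contributions is itself an RLWE sample with the hidden secret $s_{g_h}$, replaces this uniform $p_0$ by the genuine functionality output, yielding $\mathsf{SIM}_{\mathcal{A}}^{\mathcal{U},\lambda}\overset{c}{\equiv}\mathsf{REAL}_{\mathcal{U}}^{\mathcal{U},\lambda}$.

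I expect the delicate point --- and where a careless proof would slip --- to be the programming of $g_h$'s message: one must use that at least one honest party's share is never opened to $\mathcal{A}$ (guaranteed by $|\mathcal{A}|\le|\mathcal{U}|-1$) and that this share is an RLWE-masked value, so that absorbing the discrepancy between the real and simulated worlds into it is undetectable. I would also verify that the honest parties' key shares are mutually independent (they are, since in $\mathsf{MBFV\cdot SecKeyGen}$ each party samples its own share), which is what makes simulating the $\mathcal{H}$-shares by fresh honest sampling a perfect rather than merely computational step, so that the entire argument rests on the single Decisional-RLWE assumption.
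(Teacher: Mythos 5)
Your proposal is correct, and it builds the same simulator skeleton as the paper: reproduce the corrupted parties' shares exactly from $\{s_k,e_k\}_{k\in\mathcal{A}}$, fill in the remaining honest parties with dummy values, and program one distinguished honest user $g_h$ with the residual $p_0-\sum\tilde{p}_{0,j}$ so the simulated transcript is consistent with the given $pk$. Where you diverge is in how the dummy values and the indistinguishability argument are organized. The paper simulates every share of $j\in\mathcal{H}$ as a uniform element of $R_q$ and invokes Decisional-RLWE (plus the additive secret-sharing property) for those shares directly; the uniformity of $g_h$'s residual is then obtained as a by-product whenever $\mathcal{H}\neq\emptyset$, and the boundary case $\mathcal{H}=\emptyset$ is treated separately by letting the simulator output $g_h$'s true share, which the adversary could anyway reconstruct from $pk$. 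You instead simulate the $\mathcal{H}$-shares by fresh honest sampling of $(s'_k,e'_k)$ --- a perfect, information-theoretic step --- and concentrate the entire computational argument on $g_h$ through an explicit two-step hybrid (replace $g_h$'s message by uniform via RLWE, rewrite as the simulator run on a uniform $p_0$, then use RLWE once more to swap in the genuine $p_0$). Your route yields a cleaner and more rigorous reduction to a single RLWE instance with an explicit bijection argument, and it absorbs the $\mathcal{H}=\emptyset$ case automatically (the residual then coincides with $g_h$'s real share, so the simulation is perfect there); the paper's route is shorter and argues share-by-share, at the cost of a more informal treatment of the correlation between the simulated shares and $p_0$ and of needing the explicit corner-case analysis.
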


\begin{proof}
Since $\mathsf{SIM}$ has $\mathcal{A}$'s inputs $\{ s_k, e_k\}, k\in \mathcal{A}$, and the output $pk=(p_0, p_1)$ of $\mathsf{MBFV \cdot PubKeyGen}$, it needs to simulate all $p_{0,j}=[-(p_1s_j+e_j)]_q, j\in \mathcal{U}$  under two constraints: (i) the  sum of all simulated $p_{0,j}$ and those generated  by $\mathcal{A}$ must be equal to $p_0$, and (ii) the simulated $p_{0,j}$ for $\mathcal{A}$  must be equal to real ones, otherwise the adversary can easily distinguish them. We use the symbol $\tilde{p_{0,j}}$ to denote the simulated shares of $p_0$. $\mathsf{SIM}$   can generate $\tilde{p_{0,j}}$ in the following ways:
$$ \tilde{p_{0,j}}=\left\{
\begin{aligned}
&-[(p_1s_j+e_j)]_q: \; \mathsf{if\; user }\; j\in \mathcal{A}\\
& sample\; from\;  R_q: \; \mathsf{if\; user}\; j\in \mathcal{H}\\
&[p_0-\sum_{j\in{\mathcal{A}\cup \mathcal{H}}}\tilde{p_{0,j}}]_q:  \;\mathsf{if\; user}\; j=g_h \\
\end{aligned}
\right.
$$
We explain how the above simulation guarantees the indistinguishability between $(\tilde{p_{0,1}},\tilde{p_{0,2}}, \cdots \tilde{p_{0,|\mathcal{U}|}})$ and $(p_{0,1}, p_{0,2}, \cdots p_{0,|\mathcal{U}|})$.  Specifically, for each user $j \in  \mathcal{A}$,  since $\mathsf{SIM}$ has $\mathcal{A}$'s inputs $\{ s_j, e_j\}$, it can generate the share $p_{0,j}=[-(p_1s_j+e_j)]_q$, which  is exactly the same as the real value. For each user $j \in  \mathcal{H}$,  $\mathsf{SIM}$ simulates  $p_{0,j}$ by sampling an element uniformly in the distribution $R_q$.   The \textit{Decisional-RLWE Problem} \cite{cheon2017homomorphic} ensures that the sampled value is indistinguishable from the real $p_{0,j}$. Besides, the property of Additive Secret-Sharing \cite{mouchet2020multiparty} makes it a negligible probability  to restore $s_j$ and $e_j$ of the honest user, even if the collusion of multiple users.  For user $j=g_h$, we consider the following two cases: (i) When $\mathcal{H}\neq \emptyset$, $\tilde{p_{0,j}}$ is uniformly random on the distribution $R_q$, as $\sum_{j\in{\mathcal{A}\cup \mathcal{H}}}\tilde{p_{0,j}}$ is a random value distributed on $R_q$. As a result,  the same indistinguishability is achieved as described above. (ii) When $\mathcal{H}= \emptyset$, it means that $|\mathcal{U}-1|$ users are adversaries. Since $pk$ is open to all users, adversaries can reconstruct user $g_h$'s  share through $pk$ and their own knowledge. Therefore, $\mathsf{SIM}$ calculates and outputs the real value for the share of $g_h$.
\end{proof}

\subsection{Analysis of $\mathsf{MBFV\cdot Convert}$}
\label{Security Analysis2}
Similar to the above analysis, we consider an adversary set $\mathcal{A}$ to attack the function $\mathsf{MBFV\cdot Convert}$. The goal of $\mathcal{A}$ is to derive  honest users' shares $\{h_{0,k}, h_{1,k}\}_{k\in \mathcal{H}\cup g_h}$.  Hence,   given  $\mathcal{A}$'s inputs $\{ s_k, \mu_k, e_{0,k}, e_{1, k}\}, k\in \mathcal{A}$, public key $pk_i=(p_{0, i}, p_{1,i})$, and the original ciphertext $\mathsf{E}(W_{k+\frac{1}{2}, i})=(c_0, c_1)$,   the simulator needs to construct a simulated view which is indistinguishable from the adversary's view under the implementation of the real protocol.
\begin{theorem}
Given the security parameter $\lambda$, user set $\mathcal{U}$, adversary set $\mathcal{A}\subseteq \mathcal{U}$,$|\mathcal{A}|\leq |\mathcal{U}|-1$, $\mathcal{A}$'s inputs $\{ s_k, \mu_k, e_{0,k}, e_{1, k}\}_{k\in \mathcal{A}}$, $pk_i=(p_{0, i}, p_{1,i})$, original ciphertext $\mathsf{E}(W_{k+\frac{1}{2}, i})=(c_0, c_1)$,  honest user $g_h$, and $\mathcal{H}=\mathcal{U}\backslash(\mathcal{A} \cup g_i)$, there exists a PPT simulator  $\mathsf{SIM}$, whose output  is indistinguishable from the real $\mathsf{REAL}_{\mathcal{U}}^{\mathcal{U},\lambda}$ output.

\begin{small}
\begin{equation*}
\begin{split}
&\mathsf{SIM}_{\mathcal{A}}^{\mathcal{U}, \lambda} (\{ s_k, \mu_k, e_{0,k}, e_{1, k}\}_{k\in \mathcal{A}}) \\
\overset{c}{\equiv} & \mathsf{REAL}_{\mathcal{U}}^{\mathcal{U}, \lambda} (\{ s_j, \mu_j, e_{0,j}, e_{1, j}\}_{j\in \mathcal{A}})
\end{split}
\end{equation*}
\end{small}
\end{theorem}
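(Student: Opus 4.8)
The plan is to reuse the real/ideal simulation template of Theorem~1, now instantiated for the single message each honest party $k$ broadcasts inside $\mathsf{MBFV\cdot Convert}$, namely the pair $(h_{0,k},h_{1,k})=(s_kc_1+\mu_kp_{0,i}+e_{0,k},\,\mu_kp_{1,i}+e_{1,k})$ forwarded to node $i$. Thus $\mathsf{SIM}$ must produce simulated pairs $\{(\tilde h_{0,j},\tilde h_{1,j})\}_{j\in\mathcal{U}}$ under two constraints: for $j\in\mathcal{A}$ the simulated pair must coincide with the real one (recomputable by $\mathsf{SIM}$ from $\mathcal{A}$'s inputs $\{s_k,\mu_k,e_{0,k},e_{1,k}\}_{k\in\mathcal{A}}$ and the public $pk_i,c_0,c_1$), and the aggregate $(c_0+\sum_j\tilde h_{0,j},\,\sum_j\tilde h_{1,j})$ must still be a valid re-encryption of $W_{k+\frac{1}{2},i}$ under $pk_i$, so that the value the adversary eventually decrypts with $sk_i$ is consistent. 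Accordingly I would have $\mathsf{SIM}$ output the true pair for $j\in\mathcal{A}$; a fresh uniform pair in $R_q^2$ for $j\in\mathcal{H}$; and for $j=g_h$ a fresh uniform pair when $\mathcal{H}\neq\emptyset$, or else (when $\mathcal{H}=\emptyset$, so $\mathcal{A}$ controls everyone but $g_h$ and in particular node $i$) the real pair, which is then pinned down as $\tilde h_{1,g_h}=c_1'-\sum_{k\in\mathcal{A}}h_{1,k}$ and $\tilde h_{0,g_h}=c_0'-c_0-\sum_{k\in\mathcal{A}}h_{0,k}$ and is therefore reconstructible from the output ciphertext $\mathsf{E}'(W_{k+\frac{1}{2},i})$ and $\mathcal{A}$'s own values. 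The case $i\notin\mathcal{A}$ is immediate, since then honest node $i$ returns nothing inside $\mathsf{MBFV\cdot Convert}$ and $\mathcal{A}$ sees only its own shares and the public parameters.

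Indistinguishability would be proved by a hybrid argument over $\mathcal{H}$. Starting from the real view, replace one honest party $j\in\mathcal{H}$ at a time, making its pair $(h_{0,j},h_{1,j})$ uniform. Each such step reduces to Decisional-RLWE: the block $(\mu_jp_{0,i}+e_{0,j},\,\mu_jp_{1,i}+e_{1,j})$ is a two-sample RLWE instance with secret $\mu_j$ over the public pair $(p_{0,i},p_{1,i})$, hence jointly pseudorandom (two-sample RLWE follows from the single-sample version by the textbook hybrid), and since $s_jc_1$ is a fixed shift of the first coordinate, adding it leaves the pair uniform. One wrinkle is that $(p_{0,i},p_{1,i})$ is not two independent uniforms, because $p_{0,i}=-(sk_ip_{1,i}+e_{cl})$; I would absorb this with an auxiliary inner hybrid that first swaps $p_{0,i}$ for a uniform element (Decisional-RLWE under the independent secret $sk_i$), runs the two-sample step, then restores $p_{0,i}$, which is sound because $sk_i$ and $\mu_j$ are sampled independently. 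Once every $j\in\mathcal{H}$ is randomized, the partial sums $\sum_{k\neq g_h}h_{0,k}$ and $\sum_{k\neq g_h}h_{1,k}$ are uniform, so $g_h$'s pair is indistinguishable from the fresh uniform draw used by $\mathsf{SIM}$ in the $\mathcal{H}\neq\emptyset$ branch, while the $\mathcal{H}=\emptyset$ branch is exact. Throughout, the additive secret-sharing property (exactly as in Theorem~1) keeps the honest shares $s_j,\mu_j$ information-theoretically hidden even under full collusion, so the simulation introduces no extra leakage.

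The step I expect to cost the most care is the treatment of $g_h$ together with output consistency. Unlike $\mathsf{MBFV\cdot PubKeyGen}$, where the published $pk$ already fixed the missing share, here what the adversary can cross-check is the converted ciphertext $\mathsf{E}'(W_{k+\frac{1}{2},i})$ (and, after decryption, the aggregate $W_{k+\frac{1}{2},i}$ itself), which genuinely depends on the honest users' private gradients; strictly speaking the theorem statement suppresses this output from $\mathsf{SIM}$'s inputs, and a fully rigorous argument should hand $\mathsf{SIM}$ the ideal-functionality output $\mathsf{E}'(W_{k+\frac{1}{2},i})=(c_0',c_1')$. Then I would let $\mathsf{SIM}$ fix $g_h$'s pair by the residual equations $\tilde h_{1,g_h}=c_1'-\sum_{j\neq g_h}\tilde h_{1,j}$ and $\tilde h_{0,g_h}=c_0'-c_0-\sum_{j\neq g_h}\tilde h_{0,j}$, so that the simulated aggregate reproduces $\mathsf{E}'$ exactly; the remaining obligation is to show that this forced pair is statistically close to the fresh uniform pair used above, which follows from the RLWE-pseudorandomness of $(c_0',c_1')$ together with the noise bound $\|e_{ct}+e_{Covt}\|<q/(2t)$ established in the correctness analysis of $\mathsf{MBFV\cdot Convert}$. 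The remaining pieces, namely the RLWE reductions and the bookkeeping of the sums, are routine.
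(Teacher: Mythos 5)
Your simulator ultimately coincides with the paper's: true shares for $j\in\mathcal{A}$, uniform $R_q$ pairs for $j\in\mathcal{H}$, and, when the recipient can cross-check the aggregate, $g_h$'s pair pinned down as the residual so that the sums reproduce $(h_0,h_1)$; the indistinguishability is likewise hung on Decisional-RLWE plus the additive-sharing property. One correction of bookkeeping first: the trigger for the residual branch is whether node $i$ is corrupted, not whether $\mathcal{H}=\emptyset$. If $i\in\mathcal{A}$ but $\mathcal{H}\neq\emptyset$, your first-paragraph simulator hands uniform pairs to every honest party, so the simulated aggregate no longer decrypts under $sk_i$ (which the adversary holds) to the legitimate output $W_{k+\frac{1}{2},i}$, and the adversary distinguishes at once. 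Your third paragraph effectively repairs this by giving $\mathsf{SIM}$ the output ciphertext and forcing the residual; that is exactly the paper's ``$i\in\mathcal{A}$'' branch (its displayed case labels contain a typo, but the prose makes the split on $i\in\mathcal{A}$ explicit), so the split should be stated on $i\in\mathcal{A}$ versus $i\notin\mathcal{A}$ from the start.

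The step that genuinely fails is the inner hybrid that ``swaps $p_{0,i}$ for a uniform element (Decisional-RLWE under the independent secret $sk_i$)''. In the only non-trivial case, $i\in\mathcal{A}$, the key $sk_i$ is part of the distinguisher's view, so no reduction can embed an RLWE challenge whose secret is $sk_i$: the reduction would have to supply the adversary an $sk_i$ consistent with the challenge, which it does not possess. And the issue is not cosmetic, since given $sk_i$ the honest pair is correlated with the structure of $p_{0,i}$: $h_{0,j}+sk_ih_{1,j}=s_jc_1-\mu_je_{cl}+e_{0,j}+sk_ie_{1,j}$. The pseudorandomness must therefore be argued from the honest party's own secrets --- RLWE with secret $\mu_j$ for the component $\mu_jp_{1,i}+e_{1,j}$, and RLWE with secret $s_j$ for the $s_jc_1$ term that survives the elimination of $sk_i$ --- which is how the paper (informally) invokes Decisional-RLWE on the honest shares together with additive sharing. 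Relatedly, when $i\in\mathcal{A}$ you cannot claim the forced residual pair for $g_h$ is ``statistically close to uniform'': the adversary can partially decrypt with $sk_i$, and the correct claim is only that the residual reproduces the real sum constraint exactly while each individual honest share is computationally pseudorandom.
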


\begin{proof}
 We know that $\mathcal{A}$'s inputs  $\{ s_k, \mu_k, e_{0,k}, e_{1, k}\}_{k\in \mathcal{A}}$ are accessible to the $\mathsf{SIM}$. Based on this,  $\mathsf{SIM}$ is required to simulate all $\{h_{0,j}, h_{1,j}\}_{j\in \mathcal{U}}$ under two constraints:
(i) the  sum of all simulated $h_{0,j}$ and $ h_{1,j}$ must be equal to $h_0$ and $h_1$, if the recipient (i.e., user $i$) of the converted ciphertext is malicious, and (ii) the simulated $\{h_{0,k}, h_{1,k}\}$ for $\mathcal{A}$  must be equal to real ones. Otherwise the adversary can easily distinguish them. We use the symbol $\tilde{h_{0,j}}$ and $\tilde{h_{1,j}}$  to denote the simulated shares. $\mathsf{SIM}$   can generate $\tilde{h_{0,j}}$ and $\tilde{h_{1,j}}$  in the following ways:

\begin{small}
$$ (\tilde{h_{0,j}}, \tilde{h_{1,j}})=\left\{
\begin{aligned}
&([s_jc_1+\mu_j p_{0,i}+e_{0,j}]_q,[\mu_k p_{1,i}+e_{1,j}]_q),\\
&\;\;\;\;\;\;\;\;\;\;\;\;\;\;\;\;\;\;\;\;\;\;\;\;\;\;\;\;\;\;\;\;\;\;\;\;\mathsf{if\; user }\; j\in \mathcal{A}.\\
&sample\; from\;  R_q,\mathsf{if\; user}\; j\in \mathcal{H}.\\
&sample\; from\;  R_q,\mathsf{if\; user}\; j=g_h \&\& \mathsf{user}\; i\notin \mathcal{A}\\
& ([h_0-\sum_{j\in{\mathcal{A}\cup \mathcal{H}}}\tilde{h_{0,j}}]_q, [h_1-\sum_{j\in{\mathcal{A}\cup \mathcal{H}}}\tilde{h_{1,j}}]_q),\\
&\;\;\;\;\;\;\;\;\;\;\;\;\;\;\;\;\;\;\;\;\;\;\;\;\;\;\;\mathsf{if\; user}\; j=g_h \&\& \mathsf{user}\; i\notin \mathcal{A}.\\
\end{aligned}
\right.
$$
\end{small}
We explain how the above simulation guarantees the indistinguishability between $(\tilde{h_{0,j}}, \tilde{h_{1,j}})_{j\in \mathcal{U}}$ and $\{h_{0,j}, h_{1,j}\}_{j\in \mathcal{U}}$.  Specifically, for each user $j \in  \mathcal{A}$,  since $\mathsf{SIM}$ has $\mathcal{A}$'s inputs $\{ s_k, \mu_k, e_{0,k}, e_{1, k}\}$, it can generate the share $p(\tilde{h_{0,j}}, \tilde{h_{1,j}})=([s_jc_1+\mu_j p_{0,i}+e_{0,j}]_q, [\mu_k p_{1,i}+e_{1,j}]_q)$, which  is exactly the same as the real value. For each user $j \in  \mathcal{H}$,  $\mathsf{SIM}$ simulates $\{h_{0,j}, h_{1,j}\}$ by sampling an element uniformly in the distribution $\chi$.   The \textit{Decisional-RLWE Problem} \cite{cheon2017homomorphic} ensures that the sampled value is indistinguishable from the real $\{h_{0,j}, h_{1,j}\}$ . Besides, the property of Additive Secret-Sharing \cite{mouchet2020multiparty} makes it a negligible probability  to restore $\{ s_j, \mu_j, e_{0,j}, e_{1, j}\}$ of the honest user, even under the collusion of $|\mathcal{U}|-2$ users.  For user $j=g_h$, we consider the following two cases: (i) When user $i\notin \mathcal{A}$, $\tilde{h_{0,j}}, \tilde{h_{1,j}})$ is uniformly random on the distribution $\chi$.  This is because the adversary cannot access the final values $h_0$ and $h_1$.  So in this case it is not necessary to ensure that the sum of all simulated $h_{0,j}$ and $ h_{1,j}$ is  equal to $h_0$ and $h_1$. Besides,  the same indistinguishability is achieved as described above. (ii) When $i\in \mathcal{A}$, it means that   $h_0$ and $h_1$ are submitted to $\mathcal{A}$.  Hence, $(\tilde{h_{0,j}}, \tilde{h_{1,j}})$ can be constructed as $([h_0-\sum_{j\in{\mathcal{A}\cup \mathcal{H}}}\tilde{h_{0,j}}]_q, [h_1-\sum_{j\in{\mathcal{A}\cup \mathcal{H}}}\tilde{h_{1,j}}]_q)$. As a result, the  sum of all simulated $h_{0,j}$ and $ h_{1,j}$ is equal to $h_0$ and $h_1$. This guarantees the indistinguishability between the simulated view and the real view.
\end{proof}

\subsection{Analysis of $\mathsf{MBFV\cdot Bootstrap}$}
\label{Security Analysis3}

We finally discuss the security of the function $\mathsf{MBFV\cdot Bootstrap}$. Specifically, given the  adversary set $\mathcal{A}$, the goal of $\mathcal{A}$ is to derive  honest users' shares $\{\eta_{0,k}, \eta_{1,k}\}_{k\in \mathcal{H}\cup g_h}$.  The simulator needs to construct a simulated view which is indistinguishable from the adversary's view  with $\mathcal{A}$'s inputs $\{ s_k, M_k, e_{0,k}, e_{1,k}\}, k\in \mathcal{A}$ and the public ciphertext $ct=(c_0, c_1)$.
\begin{theorem}
Given the security parameter $\lambda$, user set $\mathcal{U}$, adversary set $\mathcal{A}\subseteq \mathcal{U}$,$|\mathcal{A}|\leq |\mathcal{U}|-1$, $\mathcal{A}$'s inputs $\{ s_k, M_k, e_{0,k}, e_{1,k}\}_{k\in \mathcal{A}}$,  original ciphertext $ct=(c_0, c_1)$,  honest user $g_h$, and $\mathcal{H}=\mathcal{U}\backslash(\mathcal{A} \cup g_i)$, there exists a PPT simulator  $\mathsf{SIM}$, whose output  is indistinguishable from the real $\mathsf{REAL}_{\mathcal{U}}^{\mathcal{U},\lambda}$ output.

\begin{small}
\begin{equation}
\begin{split}
&\mathsf{SIM}_{\mathcal{A}}^{\mathcal{U}, \lambda} (\{ s_k, M_k, e_{0,k}, e_{1,k}\}_{k\in \mathcal{A}})\\
\overset{c}{\equiv} & \mathsf{REAL}_{\mathcal{U}}^{\mathcal{U}, \lambda} (\{ s_j, M_j, e_{0,j}, e_{1,j}\}_{j\in \mathcal{A}})\\
\end{split}
\end{equation}
\end{small}
\end{theorem}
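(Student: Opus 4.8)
The plan is to follow, almost step for step, the real/ideal simulation template already used for $\mathsf{MBFV\cdot PubKeyGen}$ and $\mathsf{MBFV\cdot Convert}$: the function $\mathsf{MBFV\cdot Bootstrap}$ has the same one-round broadcast structure, in which every node $k\in\mathcal{U}$ reveals a single pair $(\eta_{0,k},\eta_{1,k})=(s_kc_1-\Lambda M_k+e_{0,k},\,-s_k\alpha+\Lambda M_k+e_{1,k})$ built from its private share $s_k$, a fresh uniform mask $M_k\in R_t$, fresh Gaussian noise $e_{0,k},e_{1,k}\in\chi$, and the public data $c_1,\alpha$; node $i$ then exposes only the aggregates $\eta_0=\sum_k\eta_{0,k}$, $\eta_1=\sum_k\eta_{1,k}$ (through the refreshed ciphertext $ct'$). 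Given $\mathcal{A}$'s inputs $\{s_k,M_k,e_{0,k},e_{1,k}\}_{k\in\mathcal{A}}$, the public ciphertext $ct=(c_0,c_1)$, and --- when the recipient $i$ is corrupted --- the aggregates $\eta_0,\eta_1$ that $\mathcal{A}$ observes, the simulator $\mathsf{SIM}$ must output a transcript $\{\tilde\eta_{0,j},\tilde\eta_{1,j}\}_{j\in\mathcal{U}}$ satisfying the two usual constraints: the shares of corrupted parties coincide with the honestly computed ones, and if $i\in\mathcal{A}$ the simulated shares sum to $\eta_0,\eta_1$.

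First I would define $\mathsf{SIM}$: for $j\in\mathcal{A}$ output the real formula on the known inputs (a perfect match); for $j\in\mathcal{H}$ sample $(\tilde\eta_{0,j},\tilde\eta_{1,j})$ uniformly on $R_q\times R_q$; for $j=g_h$, sample uniformly on $R_q\times R_q$ when $i\notin\mathcal{A}$, and otherwise pin $\tilde\eta_{0,g_h}=[\eta_0-\sum_{j\in\mathcal{A}\cup\mathcal{H}}\tilde\eta_{0,j}]_q$ and $\tilde\eta_{1,g_h}=[\eta_1-\sum_{j\in\mathcal{A}\cup\mathcal{H}}\tilde\eta_{1,j}]_q$. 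Then I would prove indistinguishability slot by slot: corrupted shares are identical by construction; for an honest $j$, write $(\eta_{0,j},\eta_{1,j})$ as the RLWE pair $(s_jc_1+e_{0,j},\,-s_j\alpha+e_{1,j})$ plus the deterministic mask $(-\Lambda M_j,\,\Lambda M_j)$, invoke the \emph{Decisional-RLWE} assumption \cite{cheon2017homomorphic} (with $c_1,\alpha$ as the public ring elements and $s_j$ the secret) to replace the RLWE pair by uniform on $R_q\times R_q$ --- after which adding the mask keeps it uniform --- and invoke the property of additive secret sharing \cite{mouchet2020multiparty} to keep $s_j,M_j$ hidden under collusion of up to $|\mathcal{U}|-1$ parties; when $i\in\mathcal{A}$ and $\mathcal{H}\neq\emptyset$, the pinned $g_h$-share inherits uniformity from the fresh $\mathcal{H}$-shares, so the hybrid still goes through. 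I would also remark that when $i\in\mathcal{A}$ the adversary additionally sees $c_0+\eta_0=\Lambda(m-M)$ plus low-order noise with $M=\sum_kM_k$; since at least one honest $M_k$ is uniform on $R_t$ and independent, $m-M$ is uniform and leaks nothing about the bootstrapped plaintext $m$, which is precisely what lets $ct'$ be reproduced from a uniform-looking transcript.

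The step I expect to be the main obstacle is twofold. The new wrinkle relative to the $\mathsf{MBFV\cdot Convert}$ proof is that $\eta_{0,j}$ and $\eta_{1,j}$ reuse the \emph{same} secret $s_j$ and the same (sign-flipped) mask $M_j$, so they are not two independent samples; the reduction must therefore use the multi-sample form of Decisional-RLWE and must also justify that $c_1$ --- the second slot of the aggregated ciphertext being refreshed --- is itself pseudorandom, which I would trace back to the freshness of the BFV encryptions that were summed to produce it. The other delicate point is the corner $i\in\mathcal{A}$, $\mathcal{H}=\emptyset$, where $g_h$ is the only honest node and its share is fully determined by the corrupted shares and the observed aggregates; here one cannot lean on a fresh uniform share, and must instead argue that $\mathsf{SIM}$'s pinned value equals the real one identically (a perfect, not merely computational, match), then check that no distinguisher can exploit the common $\Lambda M$ masks appearing in both output slots. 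Everything else parallels the $\mathsf{MBFV\cdot Convert}$ argument essentially verbatim.
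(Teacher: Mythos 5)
Your proposal follows essentially the same route as the paper's own proof: the same four-case simulator (real shares for $\mathcal{A}$, uniform $R_q$ samples for $\mathcal{H}$ and for $g_h$ when $i\notin\mathcal{A}$, and the pinned difference when $i\in\mathcal{A}$), with indistinguishability argued slot by slot from Decisional-RLWE and the additive secret-sharing property. The extra care you flag---the multi-sample/shared-secret form of RLWE for the pair $(\eta_{0,j},\eta_{1,j})$, the uniform mask $M$ hiding the plaintext when $i\in\mathcal{A}$, and the $\mathcal{H}=\emptyset$ corner case---are refinements the paper treats only implicitly, but they do not change the approach.
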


\begin{proof}
 Given $\mathcal{A}$'s inputs  $\{ s_k, M_k, e_{0,k}, e_{1,k}\}_{k\in \mathcal{A}}$,  $\mathsf{SIM}$  is required to simulate all $\{ \eta_{0,j}, \eta_{1,j}\}_{j\in \mathcal{U}}$ under two constraints:
(i) the  sum of all simulated $\eta_{0,j}$ and $ \eta_{1,j}$ must be equal to $\eta_0$ and $\eta_1$, if the recipient (i.e., user $i$) of the new ciphertext $ct'$ is malicious, and (ii) the simulated $\{\eta_{0,j}, \eta_{1,j}\}$ for $\mathcal{A}$  must be equal to real ones. Otherwise the adversary can easily distinguish them. We use the symbol $\tilde{\eta_{0,j}}$ and $\tilde{\eta_{1,j}}$  to denote the simulated shares. $\mathsf{SIM}$   can generate $\tilde{\eta_{0,j}}$ and $\tilde{\eta_{1,j}}$  in the following ways:

\begin{small}
$$ (\tilde{\eta_{0,j}}, \tilde{\eta_{1,j}})=\left\{
\begin{aligned}
&([s_jc_1-\Lambda M_j +e_{0,j}]_q, [-s_j\alpha+\Lambda M_j+e_{1,j}]_q),\\
& \;\;\;\;\;\;\;\;\;\;\;\;\;\;\;\;\;\;\;\;\;\;\;\;\;\;\;\;\;\;\;\;\;\;\;\;\mathsf{if\; user }\; j\in \mathcal{A}.\\
&sample\; from\, R_q, \mathsf{if\; user}\; j\in \mathcal{H}\\
&sample\; from\;  R_q, \; \mathsf{if\; user}\; j=g_h \&\& \mathsf{user}\; i\notin \mathcal{A}.\\
&([\eta_0-\sum_{j\in{\mathcal{A}\cup \mathcal{H}}}\tilde{\eta_{0,j}}]_q, [\eta_1-\sum_{j\in{\mathcal{A}\cup \mathcal{H}}}\tilde{\eta_{1,j}}]_q),\\
 &\;\;\;\;\;\;\;\;\;\;\;\;\;\;\;\;\;\;\;\;\;\;\;\;\;\;\;\;\mathsf{if\; user}\; j=g_h \&\& \mathsf{user}\; i\notin \mathcal{A}.\\
\end{aligned}
\right.
$$
\end{small}
We explain how the above simulation guarantees the indistinguishability between $(\tilde{\eta_{0,j}}, \tilde{\eta_{1,j}})_{j\in \mathcal{U}}$ and $\{\eta_{0,j}, \eta_{1,j}\}_{j\in \mathcal{U}}$.  Specifically, for each user $j \in  \mathcal{A}$,  since $\mathsf{SIM}$ has $\mathcal{A}$'s inputs $\{ s_j, M_j, e_{0,j}, e_{1,j}\}$, it can generate the share $p(\tilde{\eta_{0,j}}, \tilde{\eta_{1,j}})=([s_jc_1-\Lambda M_j +e_{0,j}]_q, [-s_j\alpha+\Lambda M_j+e_{1,j}]_q)$, which  is exactly the same as the real value. For each user $j \in  \mathcal{H}$,  $\mathsf{SIM}$ simulates $\{\eta_{0,j}, \eta_{1,j}\}$ by sampling an element uniformly in the distribution $R_q$.   The \textit{Decisional-RLWE Problem} \cite{cheon2017homomorphic} ensures that the sampled value is indistinguishable from the real $\{\eta_{0,j}, \eta_{1,j}\}$ . Besides, the property of Additive Secret-Sharing \cite{mouchet2020multiparty} makes it a negligible probability  to restore $\{ s_j, M_j, e_{0,j}, e_{1,j}\}$ of the honest user, even under the collusion of $|\mathcal{U}|-2$ users.  For user $j=g_h$, we consider the following two cases: (i) When user $i\notin \mathcal{A}$, $\tilde{\eta_{0,j}}, \tilde{\eta_{1,j}})$ is uniformly random on the distribution $R_q$.  This is because the adversary cannot access the final values $\eta_0$ and $\eta_1$.  So in this case it is not necessary to ensure that the sum of all simulated $\eta_{0,j}$ and $ \eta_{1,j}$ is  equal to $\eta_0$ and $\eta_1$. Besides,  the same indistinguishability is achieved as described above. (ii) When $i\in \mathcal{A}$, it means that   $\eta_0$ and $\eta_1$ are submitted to $\mathcal{A}$.  Hence, $(\tilde{\eta_{0,j}}, \tilde{\eta_{1,j}})$ can be constructed as $([\eta_0-\sum_{j\in{\mathcal{A}\cup \mathcal{H}}}\tilde{\eta_{0,j}}]_q, [\eta_1-\sum_{j\in{\mathcal{A}\cup \mathcal{H}}}\tilde{\eta_{1,j}}]_q)$. As a result, the  sum of all simulated $\eta_{0,j}$ and $ \eta_{1,j}$ is equal to $\eta_0$ and $\eta_1$. This guarantees the indistinguishability between the simulated view and the real view.
\end{proof}

\section{Performance Evaluation}
\label{sec:PERFORMANCE EVALUATION}
We evaluate the performance of \textbf{D$^{2}$-MHE} in terms of classification accuracy, computation and communication  overheads. Specifically,  we simulate a decentralized system with varied numbers of users using Pytorch, where we make use of Onet\footnote{\url{https://github.com/dedis/cothority}} to build the decentralized communication protocol. The average connection rate is $A_\mathcal{N}=0.2$ (i.e., each user is randomly connected to 20\% of all users in the system). Our multiparty version of BFV is modified based on the standard BFV in the SEAL library \cite{chen2017simple}, where the security parameters are taken as $2048$ and $4096$ respectively to test the performance.  The Smart-Vercauteren ciphertext packing technique \cite{xu2020secure} is used to accelerate the efficiency of encryption and ciphertext computation: we set the plaintext slot to 1024, which can pack 1024 plaintexts into one ciphertext at a time, and support  Single-Instruction-Multiple-Data (SIMD) operations.  We consider two image classification tasks trained in the decentralized learning system\footnote{We choose to train our classification tasks with relatively simple networks for simplicity. Note that one property of the HE-based algorithm is that it does not change the accuracy of the original model. Therefore, even if a deeper neural network is trained under ciphertext, the accuracy obtained is almost the same as the one in plaintext. In other words, the training accuracy and the design of the HE scheme are essentially independent of each other.}: a MLP model with two fully-connected layers (100 and 10 neurons, respectively) for MNIST, and a CNN model with two convolutional layers (kernel size of $3\times 1$ per layer) and three fully-connected layers (384 neurons per hidden layer and 10 neurons in the output layer) for CIFAR10.
All the experiments are conducted on a server running the Centos7.4 OS, equipped with 256GB RAM, 64 CPUs (Intel(R) Xeon(R) Gold 6130 CPU \@ 2.10GHZ) and 8 GPUs (Tesla V100 32G).

We select the following baselines for comparisons. (1) D-PSGD \cite{LianZZHZL17} is the original D-PSGD algorithm without any privacy protection. (2) LEASGD \cite{cheng2018leasgd}, $A(DP)^{2}SGD$ \cite{xu2020dp} and DLDP \cite{cheng2019towards} are the three most advanced decentralized learning algorithms with differential privacy. We reproduce these algorithms using exactly the same parameter configuration in their papers. (3) Threshold Paillier-HE \cite{hazay2019efficient,cramer2001multiparty} is a classic homomorphic encryption algorithm that supports distributed key encryption and decryption operations.  We extend Threshold Paillier-HE to the decentralized mode for comparison. (4) COPML \cite{dawson1994breadth} is a privacy-preserving distributed learning framework based on Shamir's secret sharing protocol, which can be regarded as a special kind of decentralized learning with a connection rate of $A_\mathcal{N}=1$. It is also feasible to adapt this framework to a generalized decentralized network.
\subsection{Classification Accuracy}
\label{sec:Accuracy}
\begin{table*}[!htbp]
\centering
\caption{Classification accuracy of different privacy-preserving approaches}
\label{Classification compared with existing approaches}
\begin{tabular}{c|c|c|c|c|c|c|c}
\Xhline{1pt}
\# of users & Dataset & D-PSGD & LEASGD & $A(DP)^{2}SGD$ & DLDP & \textbf{D$^{2}$-MHE}& Paillier-HE\\
\Xhline{1pt}
\multirow{2}{*}{50}&{MNIST}&91.23\%&87.67\% ($\epsilon=8.71$)&84.68\% ($\epsilon=9.43$)&86.7\% ($\epsilon=9.21)$ &91.17\% & 91.19\%\\
&{CIFAR-10}&65.21\%&57.61\% ($\epsilon=11.23$)&53.44\% ($\epsilon=12.13$) &50.7\% ($\epsilon=9.81$) & 65.13\% &  65.10\%\\
\Xhline{1pt}
\multirow{2}{*}{100}&{MNIST}&92.43\%&88.89\% ($\epsilon=8.91$)&85.82\% ($\epsilon=9.79$) &87.15\% ($\epsilon=9.98$) & 91.18\% & 91.14\% \\
&{CIFAR-10}&68.32\%&59.69\% ($\epsilon=12.49$) & 54.14\% ($\epsilon=13.58$) & 52.9\% ($\epsilon=11.31$) &68.25\% & 68.26\%\\
\Xhline{1pt}
\end{tabular}
\end{table*}
We first discuss the performance of \textbf{D$^{2}$-MHE} on the model classification accuracy.
Table~\ref{Classification compared with existing approaches} shows the performance comparisons of \textbf{D$^{2}$-MHE} with existing approaches in the decentralized settings of 50 and 100 users.  
Compared with D-PSGD, we observe that the accuracy drop of HE-based solutions (including \textbf{D$^{2}$-MHE} and Paillier-HE) is negligible, which is mainly attributed to the losslessness of the HE encryption algorithm. Although HE can only handle integers in ciphertext, existing optimization methods (e.g., fixed-point arithmetic circuit conversion \cite{JuvekarVC18}) ensure that the error of ciphertext evaluation for any floating-point number is maintained within $2^{-d}$ (usually $d \geq 13$).

In contrast, other three works based on differential privacy inevitably result in big accuracy drop even if the privacy budget $\epsilon>8$, which is already vulnerable to various types of privacy inference attacks\footnote{According to \cite{jayaraman2019evaluating}, differential privacy with $\epsilon>1$ will loss its effectiveness for deep learning training.}. Here we take the membership attack \cite{rahman2018membership} under $\epsilon=8$ as an example. Based on the definition of differential privacy \cite{fc4ddc15}, the condition $Pr[F(D)\in \mathbf{S}]\leq e^{8}\times Pr[F(D')\in \mathbf{S}]$ should be guaranteed for any two neighboring sets $D$ and $D'$. In other words, even if the target record detected on the dataset $D$ has a probability of 0.0001, it can be detected with a probability of as high as 0.9999 on  $D'$ containing the record. This allows the adversary to infer the presence or absence of the target record from the training data with very high confidence.

\textbf{Remark 3}:  Note that HE-based schemes always exhibit superiority in accuracy over DP-based schemes, as the latter obtains a proper trade-off between accuracy-privacy by introducing noise. However, the comparison with the DP-based solution is not only to illustrate the advantages of our method in accuracy, a more noteworthy conclusion is that it is still unclear whether DP based algorithms can provide satisfactory accuracy and privacy trade-offs in practical applications. Our experimental results are consistent with the results in  work \cite{jayaraman2019evaluating}, i.e., current mechanisms for differentially private deep learning may rarely offer acceptable accuracy-privacy trade-offs for complex learning tasks. Therefore, one of the main motivations for comparison with DP is to explain the choice to use the HE primitives, which may bring a better accuracy and privacy  performance.

\subsection{Computation Overhead}
\label{sec:Computation Overhead}
\begin{table*}[htbp]
\centering
\caption{Computation overhead of different privacy-preserving approaches for each user (unit: seconds)}
\label{Computation Overhead  existing approaches}
\begin{tabular}{c|c|c|c|c|c|c|c}
\Xhline{1pt}
Key Size & Dataset & Method & Initialization&Encryption& \makecell[c]{Ciphertext\\Evaluation} & Decryption &Total time\\
\Xhline{1pt}
\multirow{6}{*}{2048}&\multirow{3}{*}{MNIST}&D-PSGD &-&-&-&-& 2.306\\
&&\textbf{D$^{2}$-MHE}&0.66&9.96&0.54 &15.04&25.54\\
&& Paillier-HE&2.93&21.93&1.27 &33.41&56.61\\
\cline{2-8}
&\multirow{3}{*}{CIFAR-10}&D-PSGD &-&-&- &-&17.18\\
&&\textbf{D$^{2}$-MHE}&0.67&23.98&1.31&36.23&62.19\\
&& Paillier-HE&3.01&43.69&2.48 &57.24&103.40\\
\Xhline{1pt}
\multirow{6}{*}{4096}&\multirow{3}{*}{MNIST}&D-PSGD&-&-&- &-&2.30\\
&&\textbf{D$^{2}$-MHE}&1.42&21.41&1.24 &32.62&55.27\\
&& Paillier-HE&18.98&139.25&2.38 &148.98&290.61\\
\cline{2-8}
&\multirow{3}{*}{CIFAR-10}&D-PSGD&-&-&-&-&17.23\\
&&\textbf{D$^{2}$-MHE}&1.41&51.56&2.99&78.55&133.10\\
&& Paillier-HE &19.24&335.28&5.75 &358.68&726.71\\
\Xhline{1pt}
\end{tabular}
\end{table*}
We further analyze the computation cost of \textbf{D$^{2}$-MHE}.  In brief, the computation load of each user  mainly depends on the key size used for encryption and the average number of users connected to it.  Intuitively, a user needs more computing resources to handle operations with a larger key size, and interact with more users during the training process. To demonstrate this, we first fix the number of users as 100 in the system, and record the running time of each user in a single iteration (i.e., a gradient update with a mini-batch of 256).   Table~\ref{Computation Overhead  existing approaches} shows the experimental results compared with some baseline methods. To facilitate the analysis, we divide the total computation cost into 4 components: (1) \emph{Initialization} is to prepare the public and secret keys of the system. This only needs to be executed once for both \textbf{D$^{2}$-MHE} and Threshold Paillier-HE. (2) \emph{Encryption} is to encrypt gradients of each user. (3) \emph{Ciphertext Evaluation} is to perform cipertext computation. (4) \emph{Decryption} is conducted to decrypt the final results.

From Table~\ref{Computation Overhead  existing approaches}, we observe that the overhead of \textbf{D$^{2}$-MHE} and Threshold Paillier-HE is larger than D-PSGD, because all gradients are encrypted and processed under ciphertext. However, the overhead of \textbf{D$^{2}$-MHE} is significantly lower than that of Threshold Paillier-HE, especially for large key sizes. This is mainly due to the following two reasons: (i) the key sharing and reconstruction processes in Threshold Paillier-HE (including the Initialization and Decryption phases) are highly affected by the key size. A large key size makes it inevitable to perform modular exponential calculations in a large ciphertext space, thereby completing key distribution and distributed decryption. On the contrary, \textbf{D$^{2}$-MHE} only involves vector operations in the traditional sense, which is much less affected by the key size. (ii) Compared with Threshold Paillier-HE, the BFV cryptosystem is more suitable for the SIMD technology, which can process multiple ciphertexts in parallel more efficiently.

\begin{figure*}[htb]
  \centering
  \subfigure[MNIST]{\includegraphics[width=0.49\textwidth]{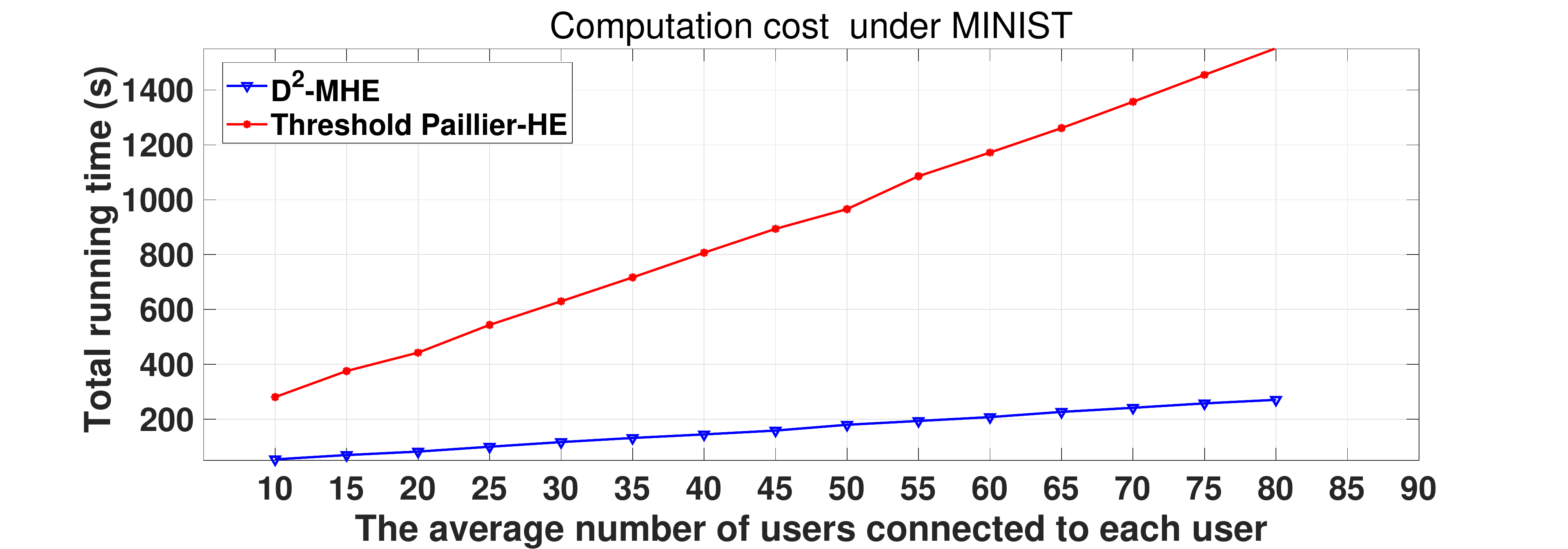}}
 \subfigure[CIFAR10]{\includegraphics[width=0.49\textwidth]{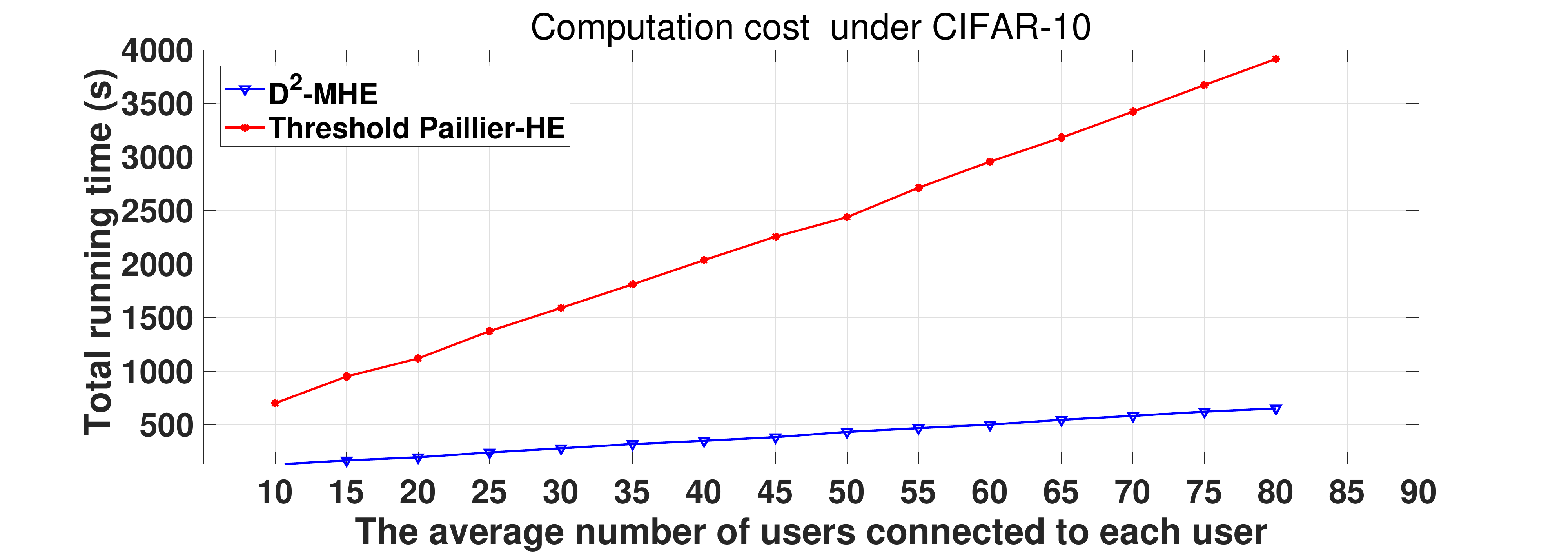}}
  \caption{Total running time of each user for different connection rates.}
  \label{Fig:Computation overhead}
\end{figure*}

We also evaluate the impact of connection rates on the computation overhead of \textbf{D$^{2}$-MHE}. We fix the number of users in the system to 100, and change the connection rate from 0.1 to 0.8. The key size of both  \textbf{D$^{2}$-MHE} and Threshold Paillier-HE is 4096 bit. Smart-Vercauteren ciphertext packing techniques \cite{xu2020secure} is used to accelerate the efficiency of encryption and ciphertext computation, where we set the plaintext slot as 1024 to pack 1024 plaintexts into one ciphertext at a time to support  Single-Instruction-Multiple-Data (SIMD) operations. Figure~\ref{Fig:Computation overhead} shows the running time of each user in a single iteration (i.e., a gradient update with a mini-batch of 256),
We can observe that as the connection rate increases, \textbf{D$^{2}$-MHE} has more significant advantages over Threshold Paillier-HE in terms of computation overhead. This is mainly due to the inefficiency of Threshold Paillier-HE  distributed decryption. As the average number of users connected to each user increases, the number of modular exponential operations performed by Threshold Paillier-HE  increases linearly. As a result, it is quite time-consuming to recover the system secret key under the ciphertext through exponential operations, thereby decrypting the target ciphertext. On the contrary, \textbf{D$^{2}$-MHE} only involves vector operations in the traditional sense, which is much less affected by changes in the connection rate compared to Threshold Paillier-HE.

\subsection{Communication Overhead}
\label{sec:Communication Overhead}

\begin{table*}[!htbp]
\centering
\caption{Theoretical communication overhead of user $i$ for different approaches}
\label{Communication with existing approaches}
\begin{tabular}{c|c|c|c}
\Xhline{1pt}
Method&Gradients-sharing&Aggregation& Total\\
\Xhline{1pt}
D-PSGD&$O(\lambda_3)$&$O(|\mathcal{N}_i|\lambda_3)$&$O((|\mathcal{N}_i|+1)\lambda_3)$\\
COPML&$O(\sum_{j\in N, E_{i,j}\neq 0}|\mathcal{N}_j|\lambda_1)$&$O(|\mathcal{N}_i|\lambda_1)$&$O(\sum_{j\in N, E_{i,j}\neq 0}|\mathcal{N}_j|\lambda_1+|\mathcal{N}_i|\lambda_1)$\\
\textbf{D$^{2}$-MHE}&$O(\lambda_2)$&$O(2|\mathcal{N}_i|\lambda_2)$&$O((2|\mathcal{N}_i|+1)\lambda_2)$\\
\Xhline{1pt}
\end{tabular}
\end{table*}

\begin{table}[!]
\begin{footnotesize}
\centering
 \caption{Experimental communication overhead of user $i$ for different approaches and datasets (unit: MB)}
 \label{Communication with existing approaches}
\begin{tabular}{c|c|c|c|c}
 \Xhline{1pt}
 {Dataset}&Model&Gradients-sharing&Aggregation& Total\\
 \Xhline{1pt}
 \multirow{2}{*}{MNIST}&COPML&$674$&$33.7$&$707.7$\\
 &\textbf{D$^{2}$-MHE}&$18.9$&$37.8$&$56.7$\\
\Xhline{1pt}
\multirow{2}{*}{CIFAR-10}&COPML&$3235$&$161.76$&$3396.76$\\
 &\textbf{D$^{2}$-MHE}&$90.72$&$181.44$&$272.16$\\
 \Xhline{1pt}
 \end{tabular}
 \end{footnotesize}
 \end{table}
We finally analyze the performance of \textbf{D$^{2}$-MHE} in terms of communication overhead,   We theoretically compare the communication complexity of \textbf{D$^{2}$-MHE} with D-PSGD and COPML. The results are shown in Table~\ref{Communication with existing approaches}, where $\lambda_i$ $(i=1, 2, 3)$ denotes the size of a single message. The total computation costs are divided into 2 components (i.e., Gradients-sharing and Aggregation) to facilitate our analysis.
Specifically, in the Gradients-sharing phase, each user $i$ in COPML is required to share its every gradient to each user set $(\mathcal{N}_j|j\in N, E_{i, j}\neq 0)$, which results in the communication complexity of $O(\sum_{j\in N, E_{i,j}\neq 0}|\mathcal{N}_j|\lambda_1)$. In contrast, in \textbf{D$^{2}$-MHE}, the gradients of all users are encrypted with the same public key. As a result,  user $i$ only needs to broadcast a single gradient to other users. In the Aggregation phase (Lines 6-8 in \textbf{Algorithm~\ref{algorithm 2}}), the complexity of COPML is consistent with that of D-PSGD, i.e., receiving information returned by each user to generate the aggregated gradient.  In general, compared to D-PSGD, the communication complexity of \textbf{D$^{2}$-MHE} is only increased by a constant multiple, while the complexity of COPML can reach $O(\sum_{j\in N, E_{i,j}\neq 0}|\mathcal{N}_j|\lambda_1+|\mathcal{N}_i|\lambda_1)$.

It is worth noting that  COPML uses a packed secret sharing method to reduce the communication complexity from $O(\sum_{j\in N, E_{i,j}\neq 0}|\mathcal{N}_j|\lambda_1+|\mathcal{N}_i|\lambda_1)$ to $O(\frac{\sum_{j\in N, E_{i,j}\neq 0}|\mathcal{N}_j|\lambda_1+|\mathcal{N}_i|\lambda_1)}{K})$, where $K$ is the number of secrets packed each time. However, packed secret sharing \cite{cramer2015secure} is restricted to $K<\min(\mathcal{N}_j|j\in N, E_{i, j}\neq 0)$ and only tolerates the  collusion of $\min|(\mathcal{N}_j|j\in N, E_{i, j}\neq 0)|-K$ users at most. On the contrary, we use the Smart-Vercauteren ciphertext packing technique \cite{xu2020secure} to pack multiple plaintexts into one ciphertext, where the number of plaintext slots is independent of the number of users in the system.  As a result,  compared with existing works, \textbf{D$^{2}$-MHE}  has a significant advantage in communication overhead.

We also present the experimental results in terms of communication overhead. We define  the sizes of a single message in COPML and \textbf{D$^{2}$-MHE} as $64$ bit and $4096$ bit, respectively. Such parameters are commonly used to ensure the security of the MPC protocol and HE. In addition, the system has 100 users with a connection rate of 0.2\footnote{ In terms of communication overhead, the connection rate exhibits a linear relationship with each user in COPML, but has no effect on our method. To be precise, the increase of the connection rate makes the number of adjacent nodes of each user increase linearly. Since the communication cost of each user has a positive linear relationship with the number of adjacent nodes, this implies a linear relationship between the connection rate and the communication cost of each user. However, our method only requires each user to broadcast a ciphertext to all users, regardless of the value of the connection rate.}. We iteratively execute  the above two schemes 500 times and 1000 times under the MNIST  and CIFAR-10 datasets respectively.  Then, we record the total communication overhead in Table~\ref{Communication with existing approaches}. For simplicity,  we assume $\min(\mathcal{N}_j|j\in N, E_{i, j}\neq 0)=10$,  thereby the maximum number of secrets shared by the package sharing protocol in COPML is $K<10=9$. Besides, the average number of $(\sum_{j\in N, E_{i,j}\neq 0}|\mathcal{N}_j|)$ is set to 20.  In our \textbf{D$^{2}$-MHE}, the plaintext slot of the Smart-Vercauteren ciphertext packing technique \cite{xu2020secure} is 1024, which can pack 1024 plaintexts into one ciphertext at a time.  We can observe that compared with  COPML, \textbf{D$^{2}$-MHE}  has a significant advantage in the communication overhead. This is mainly due to the large number of interactions in the gradient sharing process of COPML. Moreover,  we use the Smart-Vercauteren ciphertext packing techniques \cite{xu2020secure} to pack multiple plaintexts into one ciphertext, where the number of plaintext slots is independent of the number of users in the system.

\section{Conclusion}
\label{sec:conclusion}
In this work, we propose \textbf{D$^{2}$-MHE}, a practical,  privacy-preserving, and  high-fidelity decentralized deep learning framework.  To the best of our knowledge, \textbf{D$^{2}$-MHE} is  the first work to protect the privacy and accelerate the performance of decentralized learning systems using cryptographic primitives. Experimental results show that \textbf{D$^{2}$-MHE} can provide the optimal accuracy-performance trade-off compared to other state-of-the-art works.  In the future,  we will focus on improving the computation overhead of \textbf{D$^{2}$-MHE}, since this is the main bottleneck of the current homomorphic encryption applied to real-world applications.


\ifCLASSOPTIONcaptionsoff
\newpage \fi

\bibliographystyle{IEEEtran}
\bibliography{PPDR}
\begin{IEEEbiography}[{\includegraphics[width=1in,height=1.25in,clip,keepaspectratio]{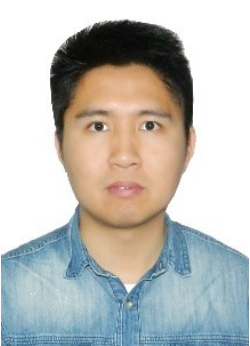}}]{Guowen Xu }
is currently a Research Fellow with Nanyang Technological University, Singapore. He received the Ph.D. degree at 2020 from University of Electronic Science and Technology of China. His research interests include applied cryptography and  privacy-preserving  issues in Deep Learning.
\end{IEEEbiography}
\vspace{-10 mm}
\begin{IEEEbiography}[{\includegraphics[width=1in,height=1.25in,clip,keepaspectratio]{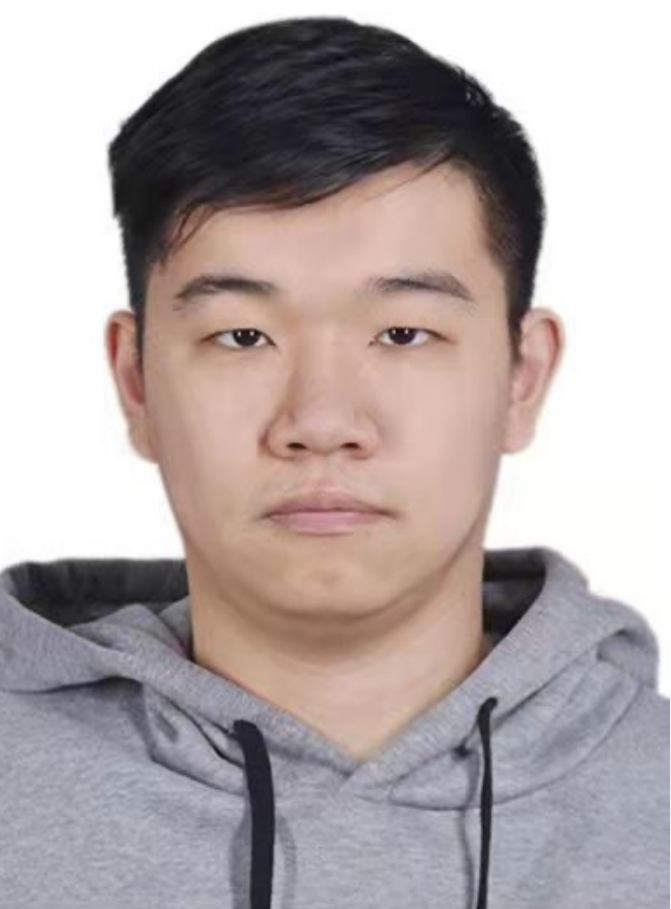}}]{Guanlin Li }
is currently a  Ph.D. student in the School of Computer Science and Engineering, Nanyang Technological University. He received the bachelor's degree in information security from the Mathematics School of Shandong University, Shandong, China in 2018. His research interests include deep learning, computer vision, adversarial example and neural network security.
\end{IEEEbiography}

\begin{IEEEbiography}[{\includegraphics[width=1in,height=1.25in]{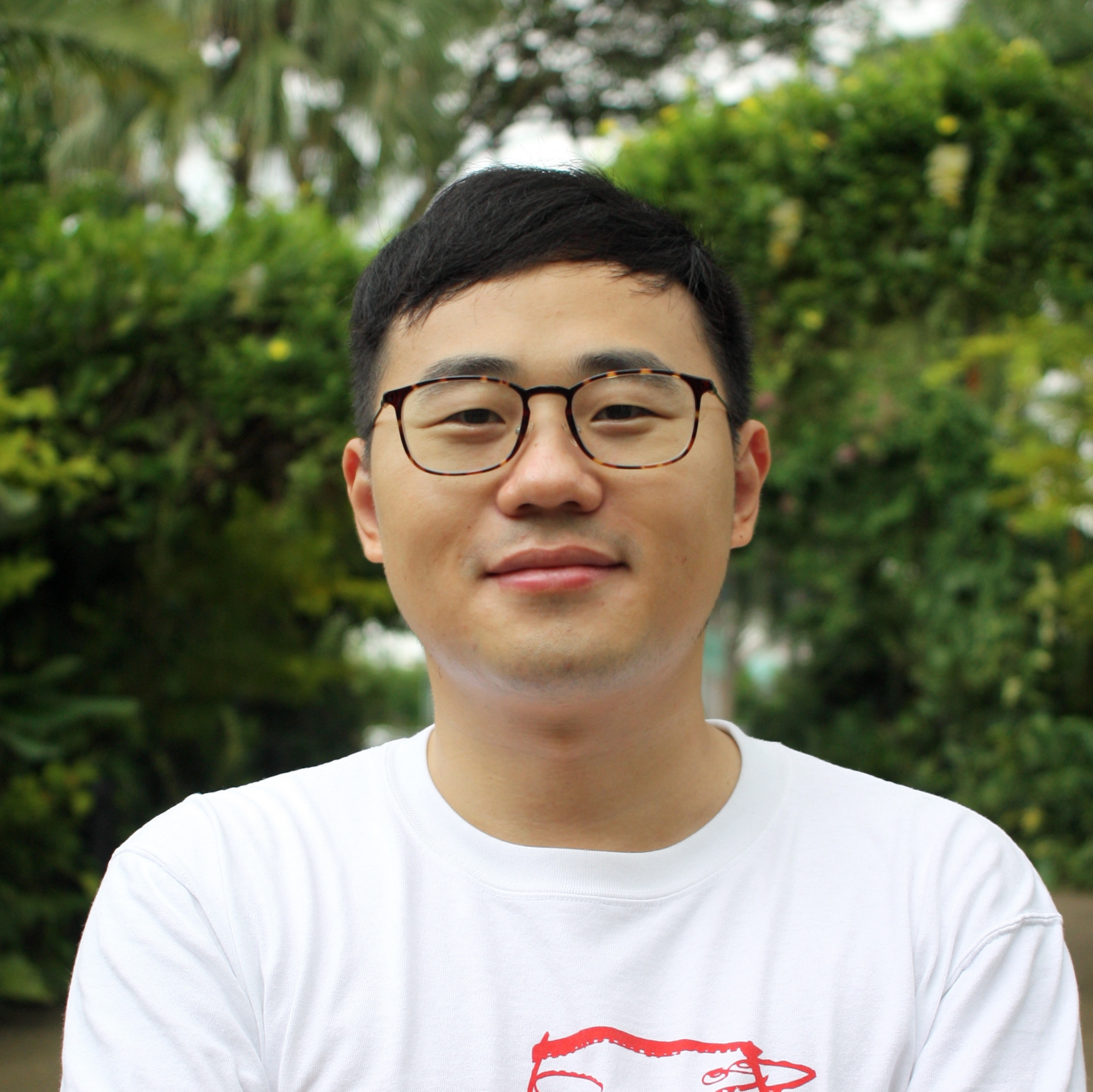}}] {Shangwei Guo} is an associate professor in College of Computer Science, Chongqing University. He received the Ph.D. degree in computer science from Chongqing University, Chongqing, China at 2017. He worked as a postdoctoral research fellow at Hong Kong Baptist University and Nanyang Technological
University from 2018 to 2020. His research interests include secure deep learning, secure cloud/edge computing, and database security.
\end{IEEEbiography}
\vspace{-10 mm}

\begin{IEEEbiography}[{\includegraphics[width=1in,height=1.25in]{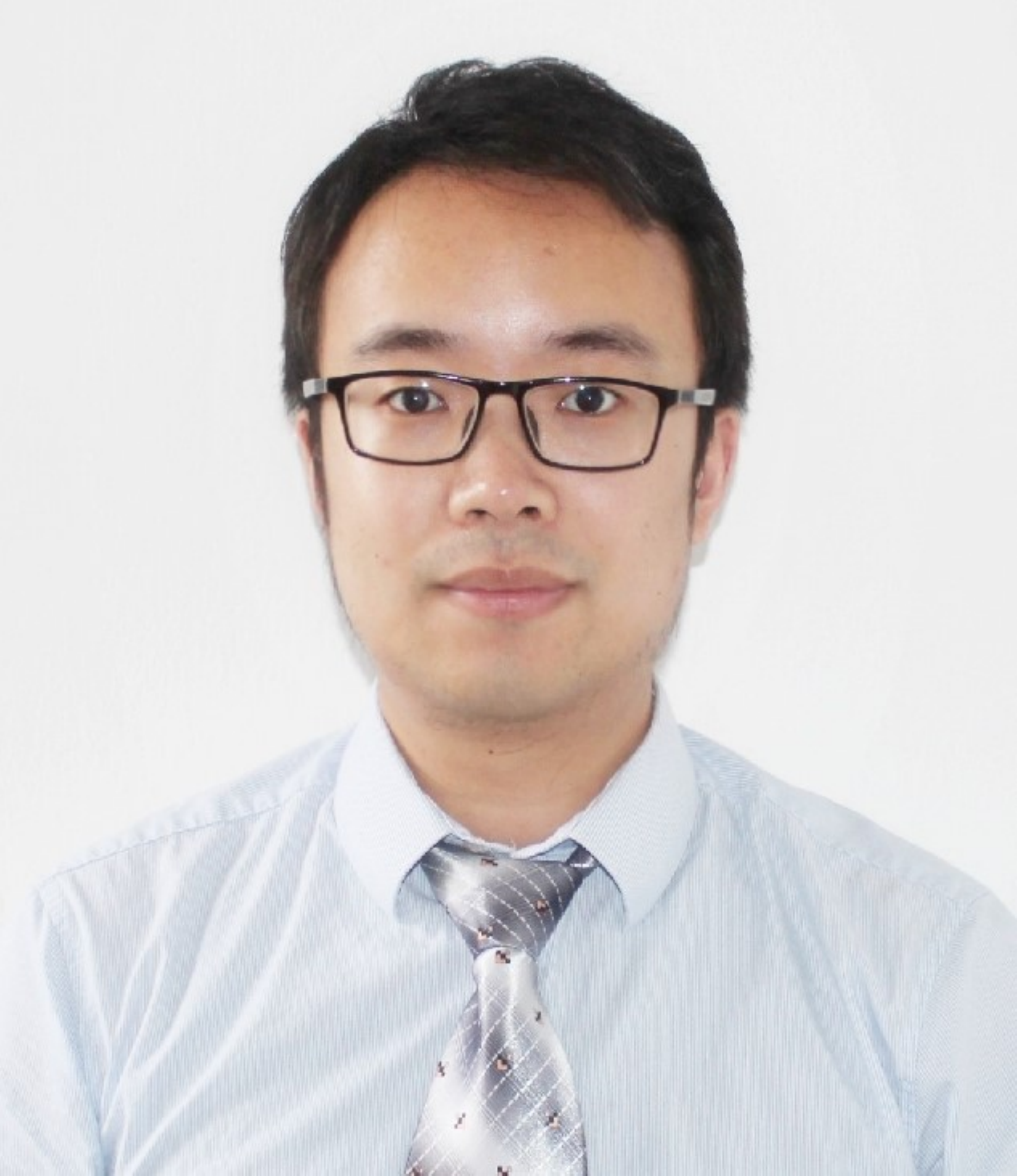}}] {Tianwei Zhang}is an assistant professor in School of Computer Science and Engineering, at Nanyang Technological University. His research focuses on computer system security. He is particularly interested in security threats and defenses in machine learning systems, autonomous systems, computer
architecture and distributed systems. He received his Bachelor's degree at Peking University in 2011, and the Ph.D degree in at Princeton University in 2017.
\end{IEEEbiography}
\vspace{-10 mm}
\begin{IEEEbiography}[{\includegraphics[width=1in,height=1.9in,clip,keepaspectratio]{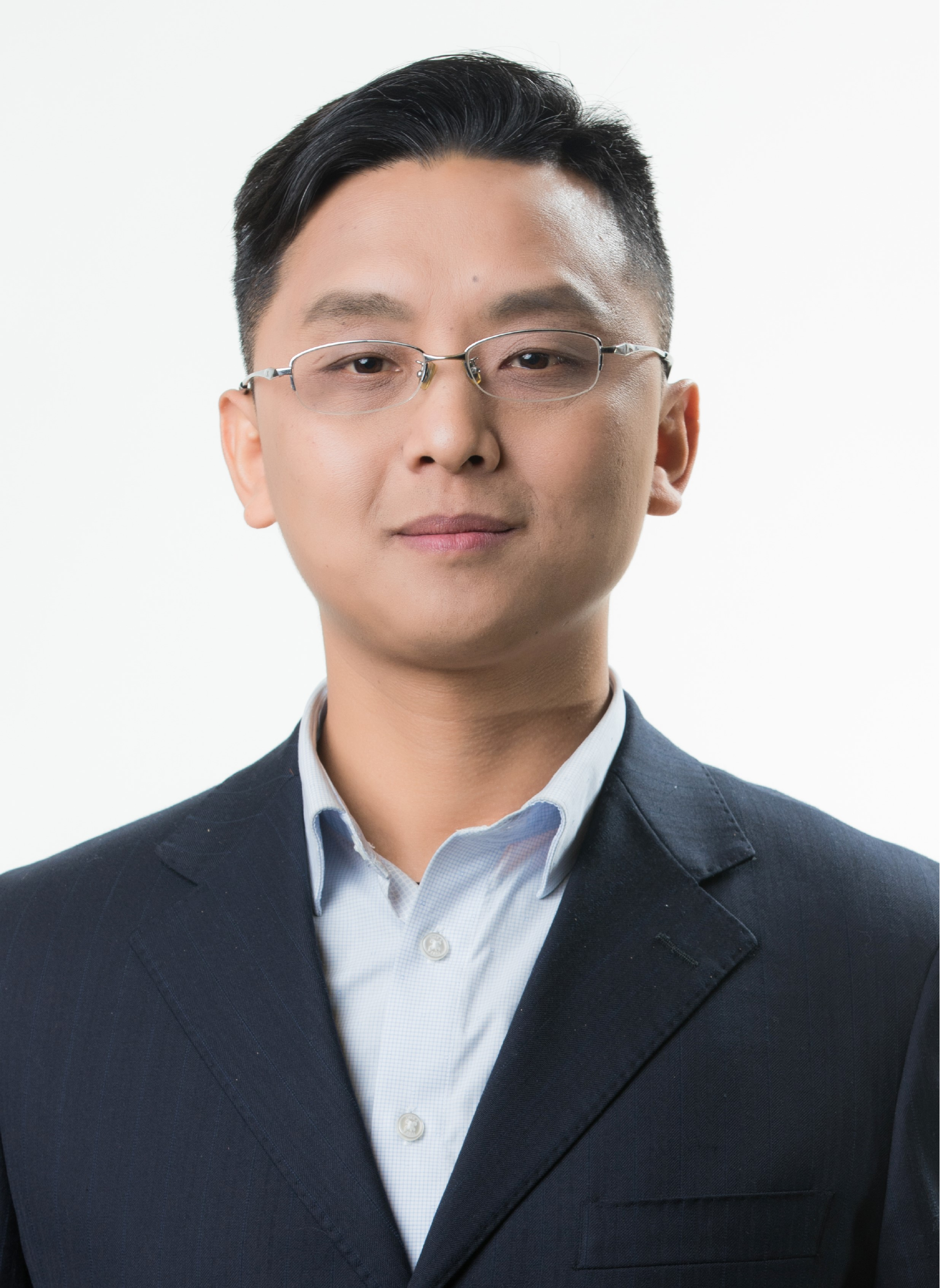}}]{Hongwei Li}
is currently the Head and a Professor at Department of Information Security, School of Computer Science and Engineering, University of Electronic Science and Technology of China.  His research interests include network security and applied cryptography. He is the Senior Member of IEEE, the Distinguished Lecturer of IEEE Vehicular Technology Society.
\end{IEEEbiography}
\vspace{-10 mm}

\end{document}